\newtheorem{Theorem}{Theorem}
\newtheorem{Lemma}[Theorem]{Lemma}
\def\ket#1{\mathinner{|{#1}\rangle}}
\def\braket#1{\mathinner{\langle{#1}\rangle}}
  \gdef\Braket#1{\left<\mathcode`\|"8000\let|\BraVert {#1}\right>}}
\def\BraVert{\egroup\,\mid@vertical\,\bgroup}
\DeclareMathOperator{\Tr}{Tr}
\DeclareMathOperator{\Span}{Span}
\DeclareMathOperator{\sgn}{sgn}
\newcommand{\T}{\mathsf{T}}
\newcommand{\vect}[1]{\mathbf{#1}}
\newcommand{\vprod}[2]{\vect{#1}\cdot\vect{#2}}
\begin{document}

\title{Tight state-independent uncertainty relations for qubits}

\author{Alastair A. Abbott}
\email{alastair.abbott@neel.cnrs.fr}
\affiliation{Institut N\'{e}el, CNRS and Universit\'{e} Grenoble Alpes, 38042 Grenoble Cedex 9, France}

\author{Pierre-Louis Alzieu}
\affiliation{Institut N\'{e}el, CNRS and Universit\'{e} Grenoble Alpes, 38042 Grenoble Cedex 9, France}
\affiliation{Ecole Normale Sup\'{e}rieure de Lyon, 69342 Lyon, France}

\author{Michael J. W. Hall}
\affiliation{Centre for Quantum Computation and Communication Technology (Australian Research Council),
Centre for Quantum Dynamics, Griffith University, Brisbane, QLD 4111, Australia}

\author{Cyril Branciard}
\email{cyril.branciard@neel.cnrs.fr}
\affiliation{Institut N\'{e}el, CNRS and Universit\'{e} Grenoble Alpes, 38042 Grenoble Cedex 9, France}

\date{\today}

\begin{abstract}
The well-known Robertson--Schr\"{o}dinger uncertainty relations have state-dependent lower bounds which are trivial for certain states.
We present a general approach to deriving tight state-independent uncertainty relations for qubit measurements that completely characterise the obtainable uncertainty values.
This approach can give such relations for any number of observables, and we do so explicitly for arbitrary pairs and triples of qubit measurements.
We show how these relations can be transformed into equivalent tight entropic uncertainty relations.
More generally, they can be expressed in terms of any measure of uncertainty that can be written as a function of the expectation value of the observable for a given state.
\end{abstract}

\pacs{}
\keywords{uncertainty relations; state-independence; quantum measurement}

	
\maketitle


\section{Introduction}

One of the most fundamental features of quantum mechanics is the fact that it is impossible to prepare states that have sufficiently precise simultaneous values of incompatible observables.
The most well-known form of this statement is the position-momentum uncertainty relation $\Delta x\Delta p \ge \hbar/2$, which places a lower bound on the product of standard deviations of the position and momentum observables, for a particle in any possible quantum state.
This relation was first formalised by Kennard~\cite{Kennard:1927oa} during the formative years of quantum mechanics following Heisenberg's discussion of his ``uncertainty principle''~\cite{Heisenberg:1927zh}. 

This ``uncertainty relation'' was quickly generalised by Robertson~\cite{Robertson:1929ee} to arbitrary pairs of incompatible (i.e.,\ non-commuting) observables $A$ and $B$ into what is now the textbook uncertainty relation.
Let $A$ and $B$ be two observables and $[A,B]=AB-BA$ their commutator.
If the standard deviations $\Delta A$ and $\Delta B$ for a system in the state $\rho$ are defined as
\begin{equation}
	\Delta A = \sqrt{\braket{A}^2-\braket{A^2}},\qquad\qquad \Delta B = \sqrt{\braket{B}^2-\braket{B^2}},
\end{equation}
where $\braket{\,\cdot\,}=\Tr[\rho\,\cdot \,]$, then Robertson's uncertainty relation can be expressed as 
\begin{equation}\label{eqn:RobertsonsUR}
\Delta A \, \Delta B \ge \left|\Braket{{\textstyle \frac{1}{2i}} [A,B]}\right|.
\end{equation}

\noindent
These uncertainty relations express a quantitative statement about the measurement statistics for $A$ and $B$ when they are \emph{measured many times, separately, on identically prepared quantum systems}.
Such relations are hence sometimes called \emph{preparation uncertainty relations}, since they propose fundamental limits on the measurement statistics for any state preparation.

This is in contrast to Heisenberg's original discussion of his uncertainty principle which he expressed as the inability to \emph{simultaneously measure} incompatible observables with arbitrary accuracy.
As such, quantum uncertainty relations have a long history of being misinterpreted as statements about joint measurements.
It is only much more recently that progress has been made in formalising \emph{measurement uncertainty relations} that quantify measurement disturbance in this way, although there continues to be some debate as to the appropriate measure of measurement (in)accuracy and of disturbance~\cite{Ozawa:2003fh,Hall:2004eq,Branciard:2013hb,Busch:2013mb,Ozawa:2013ss,Dressel:2014fx,Busch:2014ts}.

The recent interest in measurement uncertainty relations has highlighted an oft-overlooked aspect of Robertson's inequality~\eqref{eqn:RobertsonsUR}: its \emph{state dependence}.
Indeed, the right-hand side of Eq.~\eqref{eqn:RobertsonsUR} depends on the expectation value $\left|\Braket{\frac{1}{2i} [A,B]}\right|$, which itself depends on the state $\rho$ of the system and may be zero even for non-commuting $A$ and $B$.
To illustrate this, consider a spin-$\frac{1}{2}$ particle and measurement of Pauli-spin operators $\sigma_x,\sigma_y,\sigma_z$.
Robertson's inequality gives us 
\begin{equation}
\Delta \sigma_x \, \Delta \sigma_y \ge \left|\Braket{{\textstyle \frac{1}{2i}} [\sigma_x,\sigma_y]}\right|=|\Braket{\sigma_z}|,
\end{equation}
where the right-hand side is zero for any state $\rho = \frac12(\vect{1}+r_x \sigma_x + r_y \sigma_y)$ (with $r_x^2 + r_y^2 \le 1$, and with $\vect{1}$ denoting the identity operator), even though for $\rho \neq \frac12 (\vect{1}\pm\sigma_x)$ and $\rho \neq \frac12 (\vect{1}\pm\sigma_y)$ both $\Delta \sigma_x$ and $\Delta \sigma_y$ are strictly positive.

Robertson's inequalities, like other historical inequalities such as those due to Schrödinger~\cite{Schrodinger:1930gf}, therefore often tell us little about the accuracy with which one can prepare a state with respect to two incompatible observables $A$ and $B$.
There are two distinct issues with such inequalities: their triviality for certain states, and the state dependence itself.
The first issue can, in some cases, be avoided by considering more complicated expressions or different measures of incompatibility~\cite{Kaniewski:2014aa,Maccone:2014kr}.
This approach can be used to give non-trivial state-dependent uncertainty relations, which have the property that they can be experimentally verified without knowing the observables $A$ and $B$, and thus are of interest for device-independent cryptography~\cite{Kaniewski:2014aa}.
However, one may equally be interested in knowing how accurately one can prepare a state with respect to two given incompatible observables $A$ and $B$ -- that is, in characterising the ``minimum uncertainty'' states of a system.
In such a situation, one ideally wants an uncertainty relation that depends on the state of the system only via the (operationally defined) measures of uncertainty, which  ensures that the relation is an operational statement constraining the uncertainties directly, and can be evaluated without prior knowledge of the system's state.
This is indeed the case, for example, with the position-momentum uncertainty relation described earlier.
It thus makes sense to look for tighter, \emph{state-independent} relations capable of addressing these issues, and it is this situation we tackle in this paper.

\subsection{Entropic uncertainty relations}

It has long been known that an alternative form of uncertainty relation can be given by considering the entropies of the observables rather than their standard deviations~\cite{Hirschman:1957ul}.
It was Deutsch who first realised that such entropic uncertainty relations can be used to provide state-independent relations~\cite{Deutsch:1983ai}, and thus avoid the problems with traditional relations discussed above.

Rather than placing lower bounds on the product of variances of two observables $A$ and $B$, entropic uncertainty relations generally place bounds on the sum of the entropies of $A$ and $B$.
Although many different entropies can be used to formulate such inequalities, perhaps the most well known one, due to Maassen and Uffink~\cite{Maassen:1988vl}, makes use of the Shannon entropy.
If $A$ has a spectral decomposition $A=\sum_{i=1}^{d} a_i P_i$ in terms of projectors $P_i$ (with $\sum_{i=1}^{d} P_i = \vect{1}$) then the Shannon entropy of $A$ for a system in a state $\rho$ is defined as 
\begin{equation}\label{eqn:def_HA}
H(A)=-\sum_{i=1}^d \Tr[\rho\, P_i]\log\big(\Tr[\rho\, P_i]\big),
\end{equation}
which can be seen as a measure of the uncertainty in $A$ for the state $\rho$.
Following the information-theoretical convention we take the logarithms in base 2, although any base can be used as long as the choice is consistent.

Maassen and Uffink's inequality can then be stated as 
\begin{equation}\label{eqn:EntrRelnMU}
H(A) + H(B) \ge -2\log c,
\end{equation}
where $c=\max_{i,j}|\braket{a_i\mid b_j}|$ is the maximum overlap between the eigenvectors $\ket{a_i}$ and $\ket{b_j}$ of $A$ and $B$.
The lower bound is thus state \emph{independent}, and depends only on the unitary operator connecting the eigenbases of $A$ and $B$.
Although this is a significant conceptual improvement over state-dependent relations, it is not optimal since the bound cannot be saturated except for well-chosen $A$ and $B$.

Many variations and improvements on this relation have been found (see Ref.~\cite{Coles:2015ef} for a recent review) but, without imposing further restrictions, tight bounds have proved elusive.
In the two dimensional case (i.e., for qubit measurements), several papers have improved on this bound~\cite{Garrett:1990bf,Sanchez-Ruiz:1998by,Ghirardi:2003ez} to determine the optimal lower bound on the sum $H(A)+H(B)$ for arbitrary qubit observables $A$ and $B$.
This result was further generalised to a range of higher dimensional systems in Ref.~\cite{Vicente:2008oq}.
However, these bounds are not tight in the sense that, although there exist states that saturate the bound, there exist pairs of entropy values $(H(A),H(B))$ that satisfy the uncertainty relation but are not permitted by quantum mechanics.

In order to fully characterise the obtainable uncertainties, it is thus necessary to consider functions of $H(A)$ and $H(B)$ beyond their sum;
indeed, there is no \emph{a priori} reason why one should only consider entropic uncertainty relations based on the sum $H(A)+H(B)$.
Much more recent work~\cite{Abdelkhalek:2015cr} has made progress in this direction working with more general Rényi entropies, and presents several conjectures and numerical results beyond two dimensions.

\subsection{State-independent uncertainty relations for standard deviations}

The growth of interest in measurement uncertainty relations has prompted renewed interest in the possibility of state-independent uncertainty relations for the standard deviations of observables, rather than entropic relations~\cite{Huang:2012aa,Li:2015wf}.
Particular attention has been devoted to understanding the simplest case of qubit uncertainty relations, and we continue this line of research in this paper.

It will be convenient to use the Bloch sphere representation.
For the two-dimensional case of qubits, an arbitrary $\pm 1$-valued observable -- a ``Pauli observable'' -- $A$ can be written $A=\vect{a}\cdot \bm{\sigma}$ where $\bm{\sigma}=(\sigma_x,\sigma_y,\sigma_z)^\T$ and $\vect{a}$ is a unit vector.
Similarly, an arbitrary state $\rho$ can be written $\rho=\frac12 (\vect{1}+\vect{r}\cdot \bm{\sigma})$, where $|\vect{r}|=1$ if $\rho$ is a pure state and $|\vect{r}|<1$ for mixed states.

Busch et al.~\cite{Busch:2014fu} proved two state-independent uncertainty relations for arbitrary Pauli measurements $A=\vect{a}\cdot \bm{\sigma}$ and $B=\vect{b}\cdot \bm{\sigma}$, showing that
\begin{equation}\label{eqn:Busch1}\Delta A + \Delta B \ge |\vect{a}\times \vect{b}|\end{equation}
and
\begin{equation}\label{eqn:Busch2}(\Delta A)^2 + (\Delta B)^2 \ge 1 - |\vect{a}\cdot \vect{b}|.\end{equation}
Although these state-independent relations can be saturated by certain states, neither are tight in the sense discussed in the previous section.
That is, there exist pairs of values $(\Delta A, \Delta B)$ which are allowed by the relations but not realisable by any quantum state $\rho$.

In this paper we aim precisely to provide \emph{tight} state-independent uncertainty relations, so as to fully characterise the set of allowed values of $(\Delta A,\Delta B)$, for all possible states $\rho$ -- which we shall call the ``\emph{uncertainty region}''.
In Ref.~\cite{Dammeier:2015zt} this is done for the case that $|\vect{a}\cdot\vect{b}|=0$ (i.e., for orthogonal spin directions), and it was shown that the bound~\eqref{eqn:Busch2} is tight in this specific case. 
The authors further gave a tight relation for three orthogonal spin directions, as well as generalisations, both for pairs and triples of orthogonal spin observables, to higher dimensions, including the asymptotic behaviour, although these higher-dimensional results are not tight.

The restriction to orthogonal spin observables, however, is a rather strong one.
The general case for arbitrary pairs of qubit measurements was completely characterised in~\cite{Li:2015wf} using geometric methods, in particular the fact that $|\theta_{ra}-\theta_{rb}|\le\theta_{ab}\le\theta_{ra}+\theta_{rb}$, where $\theta_{ra}$ is the angle between $\vect{r}$ and $\vect{a}$, etc.
This method leads to the state-independent uncertainty relation\footnote{Eq.~\eqref{eqn:Li-QiaoReln} was proven in~\cite{Li:2015wf} for pure states (and written explicitly for $\vect{a}\cdot\vect{b} \ge 0$ only), and another version was given, for any fixed value of $|\vect{r}|$. With our equivalent version given in Eq.~\eqref{eqn:SIReln2obs} below, it is straightforward to see that~\eqref{eqn:Li-QiaoReln} also holds for mixed states.}
\begin{equation}
\label{eqn:Li-QiaoReln}\Delta A \, \Delta B \ \ge \ \left| \sqrt{1-(\Delta A)^2}\sqrt{1-(\Delta B)^2}- |\vect{a}\cdot\vect{b}| \right|.
\end{equation}
Although the authors discuss some explicit instances of three-observable relations, their method does not readily lead to a generalised form of three-observable relations.

In this paper we present a different approach to deriving tight state-independent qubit uncertainty relations.
This approach not only leads to a simpler derivation of the relation~\eqref{eqn:Li-QiaoReln}, but is immediately generalisable to give relations for three or more arbitrary observables.
It can also be used for other uncertainty measures beyond standard deviations, such as entropic measures.
This offers a unified approach to completely characterising the possible state-independent uncertainty relations in two-dimensional Hilbert space.

\section{A unified approach to qubit uncertainty relations}
\label{sec:uncertainties}

We present here a general method to derive tight state-independent uncertainty relations for qubits. 
The approach is based on the fact that, in the case of a binary measurement $A$, the expectation value $\braket{A}$ contains all the information about the uncertainty in the measurement: for instance, both $\Delta A$ and $H(A)$ can be expressed as simple functions of $\braket{A}$. 
We will thus start by giving relations characterising the set of allowed values $(\braket{A}, \braket{B})$, before translating these relations into ones in terms of standard deviations or entropies.

A formal characterisation of the set of possible uncertainty values was given by Kaniewski \emph{et al.}\ in Ref.~\cite{Kaniewski:2014aa} for the more general case of binary-valued measurements in arbitrary dimensional Hilbert spaces.
Although the characterisation they give, which is formulated in terms of expectation values of anticommutators, leads to the results we present in this section (specifically, Lemmas~\ref{lemma:generalForm}--\ref{lemma_nobs} below, albeit in a different mathematical framework), Kaniewski \emph{et al.}\ use it to derive state-\emph{dependent} entropic uncertainty relations for this generalised scenario which bound the sum of the entropies considered, and are hence not tight in the sense we consider.
In this paper our goal is instead to use this characterisation to derive general forms of tight, state-independent uncertainty relations for qubits.

For simplicity we will only consider below Pauli measurements (with eigenvalues $\pm 1$), although, as we will discuss in Section~\ref{sec:discussion}, our results can straightforwardly be generalised to any projective measurements and even, with a little effort, to binary-valued positive-operator valued measures (POVMs). 
We shall start with the case of two observables, before generalising to any number of measurements.

\subsection{For two observables}

Let us first consider two arbitrary Pauli observables $A=\vect{a}\cdot \bm{\sigma}$ and $B=\vect{b}\cdot \bm{\sigma}$, and define the matrix
\begin{equation}\label{eqn:DefM_2}
M = \begin{pmatrix}\vect{a}^\T\\\vect{b}^\T\end{pmatrix}
\end{equation}
with $\vect{a}$ and $\vect{b}$ as rows representing the measurement directions.
In the Bloch sphere representation we have $\braket{A}=\Tr[\rho\, A]=\vect{a}\cdot \vect{r}$ and $\braket{B}=\Tr[\rho\, B]=\vect{b}\cdot \vect{r}$, so that
\begin{equation}\label{eqn:uDefn}
M\vect{r}=\begin{pmatrix}\vect{a}\cdot\vect{r}\\\vect{b}\cdot\vect{r}\end{pmatrix}=\begin{pmatrix}\braket{A}\\\braket{B}\end{pmatrix} \coloneqq \vect{u} \,.
\end{equation}

Although $M$ is not invertible, one can always find the Moore-Penrose pseudoinverse $M^+$ such that $M^+M$ is an orthogonal projection onto the range of $M^\T$, that is, the subspace spanned by $\{\vect{a},\vect{b}\}$~\cite{Horn:1985aa}.
This implies the following crucial Lemma, which will be the basis of the uncertainty relations we derive in the following sections.\footnote{This Lemma is equivalent, for qubit measurements, to the ``ellipsoid condition'' given in Ref.~\cite{Kaniewski:2014aa}.}
\begin{Lemma}\label{lemma:generalForm}
	For any pair of Pauli observables $A=\vect{a}\cdot \bm{\sigma}$ and $B=\vect{b}\cdot \bm{\sigma}$ with $M$ and $\vect{u}$ as defined in Eqs.~\eqref{eqn:DefM_2} and~\eqref{eqn:uDefn}, and $M^+$ the pseudoinverse of $M$, every quantum state $\rho = \frac12 (\vect{1}+\vect{r}\cdot \bm{\sigma})$ satisfies
	\begin{equation}\label{eqn:PseudoInvSD}
		|M^+\vect{u}| = |(M^+M)\vect{r}| \le |\vect{r}|,
	\end{equation}
	where equality is obtained if and only if $\vect{r}$ lies in $\Span \{\vect{a},\vect{b}\}$.
\end{Lemma}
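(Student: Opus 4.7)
The plan is to reduce the statement to the elementary fact that orthogonal projections are norm non-increasing. The key ingredient — already invoked in the paragraph just above the Lemma — is the standard property of the Moore--Penrose pseudoinverse that $P := M^+ M$ is the orthogonal projection onto the row space of $M$, namely $\Span\{\vect{a},\vect{b}\}$ (the latter identification being immediate from the definition of $M$ in Eq.~\eqref{eqn:DefM_2}).

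First I would establish the equality $M^+\vect{u} = M^+M\vect{r}$. This is immediate: by Eq.~\eqref{eqn:uDefn}, $\vect{u} = M\vect{r}$, so applying $M^+$ on the left yields the desired identity. Next, for the inequality $|P\vect{r}| \le |\vect{r}|$, I would write the orthogonal decomposition $\vect{r} = P\vect{r} + (\vect{r} - P\vect{r})$, where the first summand lies in $\Span\{\vect{a},\vect{b}\}$ and the second in its orthogonal complement — this is precisely the content of $P$ being an orthogonal projection onto $\Span\{\vect{a},\vect{b}\}$. The Pythagorean theorem then gives $|\vect{r}|^2 = |P\vect{r}|^2 + |\vect{r} - P\vect{r}|^2 \ge |P\vect{r}|^2$, with equality if and only if $\vect{r} - P\vect{r} = 0$, i.e.\ if and only if $\vect{r} \in \Span\{\vect{a},\vect{b}\}$. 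This handles both the inequality and the saturation condition in one stroke.

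There is no serious obstacle: the lemma is essentially a direct corollary of the defining properties of the pseudoinverse combined with Pythagoras. The only care needed in the write-up is to properly invoke the characterisation of $M^+ M$ as the orthogonal projector onto the row space of $M$ (cited via Ref.~\cite{Horn:1985aa}), and to note that the argument makes no use of the specific fact that $M$ has two rows, so the proof will transparently generalise to the $n$-observable case needed later for Lemma~\ref{lemma_nobs}.
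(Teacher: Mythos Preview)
Your proposal is correct and is essentially the same approach as the paper's: the paper does not give a separate proof of Lemma~\ref{lemma:generalForm} but simply notes (in the paragraph preceding it, citing Ref.~\cite{Horn:1985aa}) that $M^+M$ is the orthogonal projection onto $\Span\{\vect{a},\vect{b}\}$ and that this ``implies'' the Lemma --- your write-up just makes explicit the Pythagorean step that the paper leaves to the reader.
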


In a more explicit form, this inequality implies the following relation for the two expectation values $\braket{A}$ and $\braket{B}$:

\begin{Lemma}\label{lemma_2obs}
For any pair of Pauli observables $A=\vect{a}\cdot \bm{\sigma}$ and $B=\vect{b}\cdot \bm{\sigma}$, every quantum state $\rho = \frac12 (\vect{1}+\vect{r}\cdot \bm{\sigma})$ satisfies
\begin{equation}\label{eqn:SIReln2obs_avg}
		 \left|\ \braket{A}\vect{a} - \braket{B}\vect{b} \ \right|^2
		 \ \braket{A}^2 + \braket{B}^2 - 2 \, (\vect{a}\cdot\vect{b}) \braket{A} \braket{B}
		\  \le \ \big(1-(\vect{a}\cdot\vect{b})^2 \big) \, |\vect{r}|^2
		\ \le \ 1 - (\vect{a}\cdot\vect{b})^2 = |\vect{a}\times\vect{b}|^2 \,.
\end{equation}
In the case where $|\vect{a}\cdot\vect{b}| < 1$, the first and second inequalities are saturated if and only if $\vect{r}\in\Span\{\vect{a},\vect{b}\}$ and if and only if $\rho$ is a pure state, respectively.
\end{Lemma}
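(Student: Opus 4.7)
The plan is to derive Lemma~\ref{lemma_2obs} directly from Lemma~\ref{lemma:generalForm} by computing $|M^+\vect{u}|^2$ explicitly and recognising it as (a rescaled version of) the middle expression in the stated chain of inequalities.

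First I would handle the generic case $|\vect{a}\cdot\vect{b}|<1$. Here the $2\times 3$ matrix $M$ has full row rank, so the pseudoinverse takes the closed form $M^+ = M^\T (MM^\T)^{-1}$. A direct $2\times 2$ computation gives
\[
MM^\T \,=\, \begin{pmatrix} 1 & \vect{a}\cdot\vect{b}\\ \vect{a}\cdot\vect{b} & 1\end{pmatrix}, \qquad (MM^\T)^{-1} \,=\, \frac{1}{1-(\vect{a}\cdot\vect{b})^2}\begin{pmatrix} 1 & -\vect{a}\cdot\vect{b}\\ -\vect{a}\cdot\vect{b} & 1\end{pmatrix},
\]
so that
\[
|M^+\vect{u}|^2 \,=\, \vect{u}^\T(MM^\T)^{-1}\vect{u} \,=\, \frac{\braket{A}^2+\braket{B}^2-2(\vect{a}\cdot\vect{b})\braket{A}\braket{B}}{1-(\vect{a}\cdot\vect{b})^2}.
\]
Multiplying through by the positive factor $1-(\vect{a}\cdot\vect{b})^2$ and substituting into the chain $|M^+\vect{u}|^2 \le |\vect{r}|^2 \le 1$ provided by Lemma~\ref{lemma:generalForm} (together with $|\vect{r}|\le 1$ for any Bloch vector) produces both inequalities of Eq.~\eqref{eqn:SIReln2obs_avg} at once. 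I would then observe that, expanding using $|\vect{a}|=|\vect{b}|=1$, the middle expression coincides with $|\braket{A}\vect{a}-\braket{B}\vect{b}|^2$, and that $1-(\vect{a}\cdot\vect{b})^2 = |\vect{a}\times\vect{b}|^2$ is Lagrange's identity for unit vectors.

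Second, the degenerate case $|\vect{a}\cdot\vect{b}|=1$ (i.e.\ $\vect{b}=\pm\vect{a}$, so $\braket{B}=\pm\braket{A}$) must be treated separately, since the formula for $M^+$ above no longer applies. In that case $\braket{A}\vect{a}-\braket{B}\vect{b}$ vanishes identically, as does $1-(\vect{a}\cdot\vect{b})^2$, and the statement reduces to the trivial identity $0\le 0\le 0$.

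For the saturation conditions in the regime $|\vect{a}\cdot\vect{b}|<1$, the first inequality inherits its saturation criterion directly from Lemma~\ref{lemma:generalForm}: equality $|M^+M\vect{r}|=|\vect{r}|$ holds iff $\vect{r}$ lies in the range of the orthogonal projector $M^+M$, which is $\Span\{\vect{a},\vect{b}\}$. The second inequality is just $|\vect{r}|\le 1$ after cancelling $1-(\vect{a}\cdot\vect{b})^2>0$, and is saturated iff $|\vect{r}|=1$, i.e.\ exactly when $\rho$ is pure. The main obstacle is essentially bookkeeping --- inverting the $2\times 2$ Gram matrix, matching the resulting quadratic form to $|\braket{A}\vect{a}-\braket{B}\vect{b}|^2$, and isolating the edge case where the pseudoinverse formula degenerates --- with no further conceptual input required beyond Lemma~\ref{lemma:generalForm}.
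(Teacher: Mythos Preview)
Your proof is correct and follows essentially the same approach as the paper's: both derive the inequality from Lemma~\ref{lemma:generalForm} via $M^+ = M^\T(MM^\T)^{-1}$, handle the degenerate case $|\vect{a}\cdot\vect{b}|=1$ separately, and read off the saturation conditions. The only difference is that the paper fixes explicit coordinates for $\vect{a}$ and $\vect{b}$ before computing $M^+$, whereas you work coordinate-free by recognising $|M^+\vect{u}|^2 = \vect{u}^\T(MM^\T)^{-1}\vect{u}$ directly from the $2\times 2$ Gram matrix --- a slight streamlining.
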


\begin{proof}
Let us choose a basis in the Bloch sphere such that $\vect{a}$ and $\vect{b}$ can be written $\vect{a} = \big(\sqrt{\frac{1+\vect{a}\cdot\vect{b}}{2}}, \sqrt{\frac{1-\vect{a}\cdot\vect{b}}{2}}, 0\big)^\T$ and $\vect{b} = \big(\sqrt{\frac{1+\vect{a}\cdot\vect{b}}{2}}, -\sqrt{\frac{1-\vect{a}\cdot\vect{b}}{2}}, 0\big)^\T$, and assume first that $|\vect{a}\cdot\vect{b}| < 1$.
Then $M$ has linearly independent rows and its pseudoinverse can be obtained using the relation $M^+=M^\T(MM^\T)^{-1}$. 
We thus have
\begin{equation}
M^+ = \frac{1}{\sqrt{1-(\vect{a}\cdot\vect{b})^2}}
\begin{pmatrix}
	\sqrt{\frac{1-\vect{a}\cdot\vect{b}}{2}} & \sqrt{\frac{1-\vect{a}\cdot\vect{b}}{2}} \\
	\sqrt{\frac{1+\vect{a}\cdot\vect{b}}{2}} & - \sqrt{\frac{1+\vect{a}\cdot\vect{b}}{2}} \\
	0 & 0\\
\end{pmatrix} \quad \text{and} \quad
M^+M = 
\begin{pmatrix}
	1 & 0 & 0\\
	0 & 1 & 0\\
	0 & 0 & 0\\
\end{pmatrix},
\end{equation}
where we see that $M^+M$ is a projection onto $\Span\{\vect{a},\vect{b}\}$, as expected.
The first inequality in Eq.~\eqref{eqn:SIReln2obs_avg} is then obtained by squaring Eq.~\eqref{eqn:PseudoInvSD}, multiplying it by $\big(1-(\vect{a}\cdot\vect{b})^2 \big)$, and expanding; like~\eqref{eqn:PseudoInvSD}, it is saturated when $\vect{r}\in\Span\{\vect{a},\vect{b}\}$. The second inequality follows directly from $|\vect{r}| \le 1$, and is saturated when $\rho$ is a pure state.

In the (trivial) case where $\vect{a}\cdot\vect{b} = \pm 1$ (i.e., $A = \pm B$), one can easily check that the left-hand side of Eq.~\eqref{eqn:SIReln2obs_avg} is $0$ so that the relation still holds.
\end{proof}

Relation~\eqref{eqn:SIReln2obs_avg}, as also noted in Ref.~\cite{Kaniewski:2014aa}, shows that the set of allowed values for $\braket{A}$ and $\braket{B}$ forms an ellipse in the $(\braket{A}, \braket{B})$-plane, as depicted in Fig.~\ref{fig:AvgRelns} for $\vect{a}\cdot\vect{b} = 0$ and $\vect{a}\cdot\vect{b}=\frac12$ (cf. Fig.~1 in Ref.~\cite{Kaniewski:2014aa}). 
As can be seen, this ellipse becomes a circle for $\vect{a}\cdot\vect{b} = 0$, and degenerates into the line segment given by $\braket{A} = \pm \braket{B}$ when $\vect{a}\cdot\vect{b} = \pm 1$.
Note also that the first inequality in~\eqref{eqn:SIReln2obs_avg} characterises concentric ellipses for fixed maximal values of $|\vect{r}|$ (see Figure~\ref{fig:AvgRelns}).

\begin{figure}[t]
	\begin{center}
		\begin{tabular}{ccc}
			\includegraphics[width=0.47\textwidth]{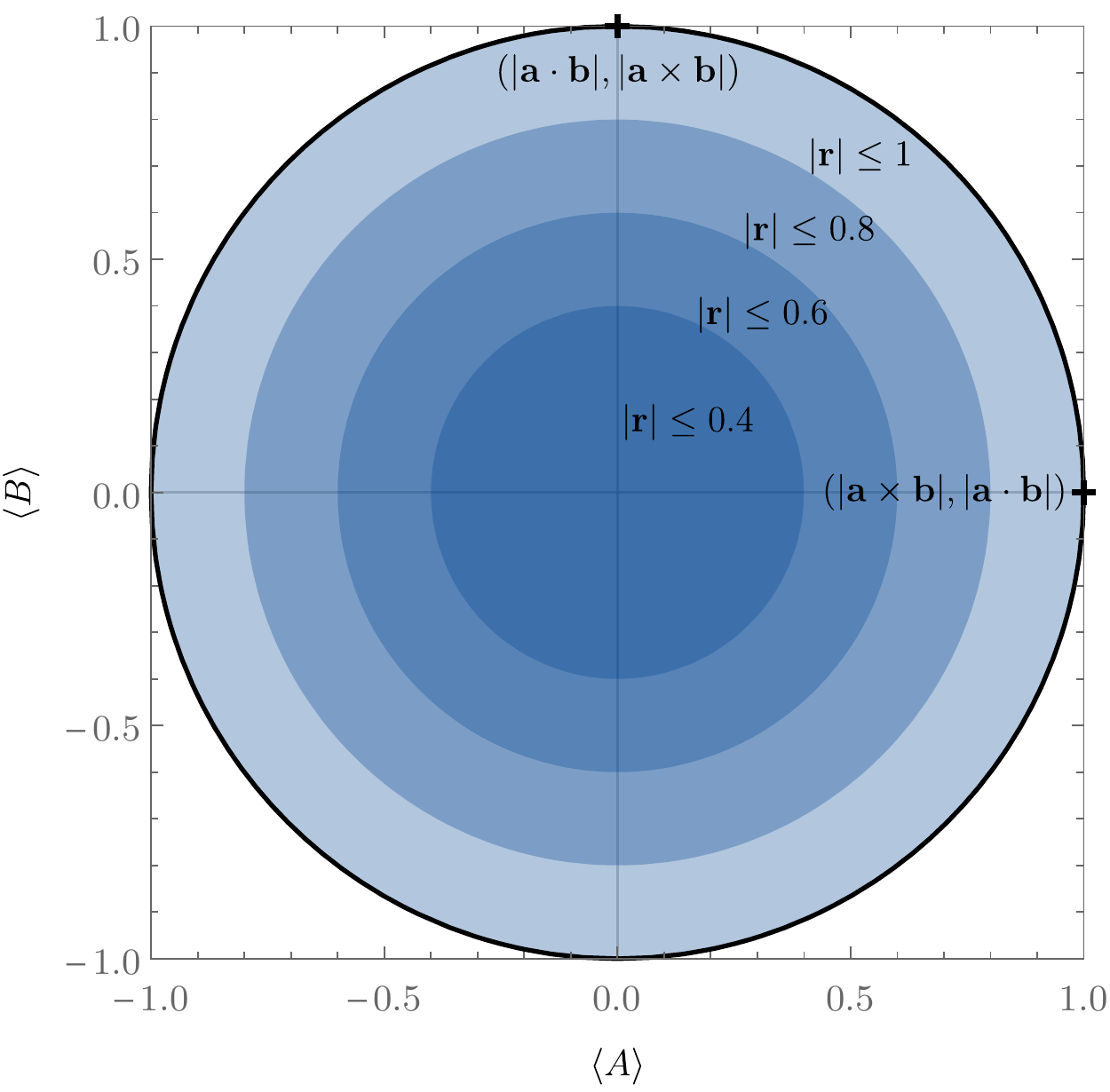} &
			 &
			\includegraphics[width=0.47\textwidth]{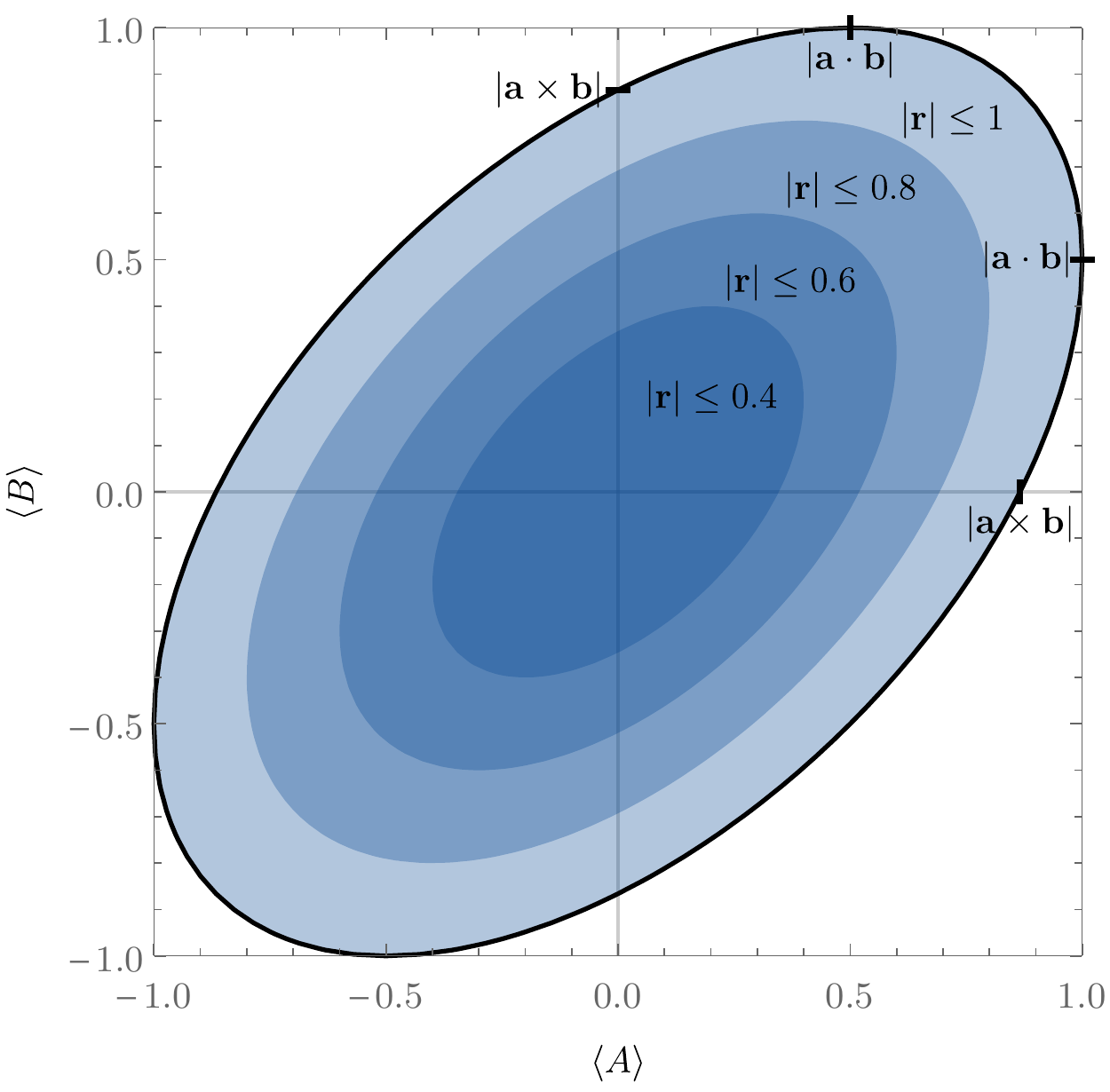}\\
			(a) & & (b)\\
		\end{tabular}
	\end{center}
	\caption{The ellipse delimited by the thick black line corresponds to the set of all permitted values $(\braket{A},\braket{B})$ for two fixed Pauli operators $A=\vect{a}\cdot \bm{\sigma}$ and $B=\vect{b}\cdot \bm{\sigma}$ with (a) $\vect{a}\cdot\vect{b} = 0$ and (b) $\vect{a}\cdot\vect{b} = \frac12$ for all possible states $\rho$, defined by the relation~\eqref{eqn:SIReln2obs_avg}. The values $(\braket{A},\braket{B})$ on its boundary saturate Eq.~\eqref{eqn:SIReln2obs_avg}. The darker concentric ellipses represent the regions of permitted values for mixed states with bounded Bloch vector norms $|\vect{r}| \le 0.8, 0.6$, and $0.4$, characterised using the first inequality in Eq.~\eqref{eqn:SIReln2obs_avg}.}
	\label{fig:AvgRelns}
\end{figure}

We emphasise that the relation~\eqref{eqn:SIReln2obs_avg} is tight. To verify (for the nontrivial case that $|\vect{a}\cdot\vect{b}| < 1$) that any pair $(\braket{A},\braket{B})$ satisfying Eq.~\eqref{eqn:SIReln2obs_avg} can be attained, take for instance
\begin{equation}
		\vect{r} = \frac{1}{1-(\vect{a}\cdot\vect{b})^2} \Big[ \braket{A} \big( \vect{a} - (\vect{a}\cdot\vect{b}) \, \vect{b} \big) + \braket{B} \big( \vect{b} - (\vect{a}\cdot\vect{b}) \, \vect{a} \big) \Big] \, .
\end{equation}
It gives the desired values for $\braket{A}$ and $\braket{B}$ and is indeed a valid Bloch vector -- that is, its norm is at most $1$ -- if and only if Eq.~\eqref{eqn:SIReln2obs_avg} is satisfied. 
It clearly lies in the $\vect{a}\vect{b}$-plane and thus saturates the first inequality in~\eqref{eqn:SIReln2obs_avg}.
To saturate both inequalities in~\eqref{eqn:SIReln2obs_avg} for a given value of $\braket{A}$, one can take
\begin{equation}
		\vect{r}_\pm = \braket{A} \vect{a} \pm \sqrt{\frac{1-\braket{A}^2}{1-(\vect{a}\cdot\vect{b})^2}} \big( \vect{b} - (\vect{a}\cdot\vect{b}) \, \vect{a} \big) \, ,
\end{equation}
which characterises the pure states ($|\vect{r}_\pm | = 1$) in the $\vect{a}\vect{b}$-plane and gives the desired value for $\braket{A}$.

\subsection{Generalisation to more observables}

Our approach, based on Lemma~\ref{lemma:generalForm}, generalises easily to more than two observables. 
Remarkably, when the various observables $A_i$ span the whole Bloch sphere, it provides \emph{exact relations} on the expectation values $\braket{A_i}$, rather than just inequalities.

More specifically, let us consider $n$ observables $A_1=\vect{a}_1\cdot \bm{\sigma},\dots,A_n=\vect{a}_n\cdot \bm{\sigma}$ and define $M=(\vect{a}_1^T,\dots,\vect{a}_n^\T)^\T$, $M^+$ its pseudoinverse, and $\vect{u} = (\braket{A_1},\dots,\braket{A_n})^\T$.
Then the relation~\eqref{eqn:PseudoInvSD} of Lemma~\ref{lemma:generalForm} holds unchanged, with equality if and only if $\vect{r}$ lies in $\Span \{\vect{a}_1,\dots,\vect{a}_n\}$. This straightforwardly implies (after squaring Eq.~\eqref{eqn:PseudoInvSD}) the following relation:

\begin{Lemma}\label{lemma_nobs}
For any $n$ Pauli observables $A_1,\dots,A_n$ with Bloch vectors $\vect{a}_1,\dots,\vect{a}_n$ spanning the whole Bloch sphere, every quantum state $\rho = \frac12 (\vect{1}+\vect{r}\cdot \bm{\sigma})$ satisfies the relation
\begin{equation}\label{eqn:SIReln_nobs_avg}
	\sum_{1 \le i,j \le n} m_{i,j} \braket{A_i} \braket{A_j} = |\vect{r}|^2 \le 1,
\end{equation}
where the coefficients $m_{i,j}$ are the elements of the symmetric matrix $(M^+)^\T M^+$, with $M^+$ the pseudoinverse of the matrix $M=(\vect{a}_1^T,\dots,\vect{a}_n^\T)^\T$.

If the $n$ Bloch vectors $\vect{a}_1,\dots,\vect{a}_n$ do not span the whole Bloch sphere, then the equality in Eq.~\eqref{eqn:SIReln_nobs_avg} must be replaced by an inequality (with $|\vect{r}|^2$ upper-bounding the left-hand side), which is saturated if and only if $\vect{r} \in \Span \{\vect{a}_1,\dots,\vect{a}_n\}$.
\end{Lemma}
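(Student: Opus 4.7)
The plan is to apply Lemma~\ref{lemma:generalForm} essentially verbatim in the $n$-observable setting, and then rewrite the squared norm $|M^+\vect{u}|^2$ as the quadratic form on the left-hand side of Eq.~\eqref{eqn:SIReln_nobs_avg}. The argument splits cleanly into an algebraic step (expanding the quadratic form) and a geometric step (identifying $M^+M$ with an orthogonal projector and using the fact that this projector is the identity exactly when the $\vect{a}_i$ span $\mathbb{R}^3$).

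First I would observe that, with the $n$-dimensional definitions of $M$ and $\vect{u}$ given just above the statement, the relation $\vect{u}=M\vect{r}$ still holds by the same computation as in Eq.~\eqref{eqn:uDefn}, and the characterisation of $M^+M$ as the orthogonal projector onto the range of $M^\T$, namely $\Span\{\vect{a}_1,\dots,\vect{a}_n\}\subseteq \mathbb{R}^3$, carries over without change. Thus Lemma~\ref{lemma:generalForm} immediately yields $|M^+\vect{u}|=|(M^+M)\vect{r}|\le |\vect{r}|\le 1$, with equality in the first inequality iff $\vect{r}\in\Span\{\vect{a}_1,\dots,\vect{a}_n\}$.

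Next I would square this and expand, using $(M^+\vect{u})^\T(M^+\vect{u})=\vect{u}^\T(M^+)^\T M^+\vect{u}$. Setting $m_{i,j}\coloneqq\big((M^+)^\T M^+\big)_{ij}$, which is manifestly symmetric, and noting that the entries of $\vect{u}$ are the expectation values $\braket{A_i}$, gives
\begin{equation}
|M^+\vect{u}|^2=\sum_{1\le i,j\le n} m_{i,j}\braket{A_i}\braket{A_j}.
\end{equation}
Combined with the previous step this already establishes the inequality version of Eq.~\eqref{eqn:SIReln_nobs_avg}, together with the saturation condition $\vect{r}\in\Span\{\vect{a}_1,\dots,\vect{a}_n\}$ (the outer bound $|\vect{r}|^2\le 1$ being saturated exactly when $\rho$ is pure).

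The only remaining point is to upgrade the inequality to an equality when the $\vect{a}_i$ span $\mathbb{R}^3$. The key fact I would invoke is that the orthogonal projector onto $\Span\{\vect{a}_1,\dots,\vect{a}_n\}=\mathbb{R}^3$ is the $3\times 3$ identity, so in that case $M^+M=\vect{1}$ and $(M^+M)\vect{r}=\vect{r}$ for every Bloch vector, giving $|M^+\vect{u}|^2=|\vect{r}|^2$ identically. There is no real obstacle here; the only subtlety worth flagging is the standard Moore--Penrose identity $M^+M=P_{\mathrm{range}(M^\T)}$, which is what makes the spanning condition on $\{\vect{a}_1,\dots,\vect{a}_n\}$ translate directly into the equality-versus-inequality dichotomy stated in the lemma.
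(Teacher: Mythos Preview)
Your proposal is correct and matches the paper's own argument essentially verbatim: the paper simply states that Lemma~\ref{lemma:generalForm} holds unchanged for $n$ observables and that squaring Eq.~\eqref{eqn:PseudoInvSD} yields the relation, with the spanning condition ensuring $M^+M=\vect{1}$ and hence equality. Your write-up just fills in the details (expanding $|M^+\vect{u}|^2$ as the quadratic form and invoking $M^+M=P_{\mathrm{range}(M^\T)}$) that the paper leaves implicit.
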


This relation shows that, in the general case, the set of allowed values for $(\braket{A_1},\dots,\braket{A_n})$ lie on an $n$-dimensional ellipsoid~\cite{Kaniewski:2014aa}; when at least three of the Bloch vectors are linearly independent (i.e., when the $n$ Bloch vectors span the whole Bloch sphere), all pure states give points on the surface of the ellipsoid, while mixed states give interior points.

Although this relation is satisfied for any state $\rho$, for $n>3$ observables (or, more generally, if $n$ exceeds the dimension of $\Span \{\vect{a}_1,\dots,\vect{a}_n\}$) it is not tight since there exist values $(\braket{A_1},\dots,\braket{A_n})\coloneqq\vect{u}$ which satisfy it, but are not obtainable by any quantum state $\rho$.
Specifically, $\vect{u}$ is realisable if and only if there exists a quantum state with Bloch vector $\vect{r}$ such that $M\vect{r}=\vect{u}$.
For $n=3$ observables with $\vect{a}_1,\vect{a}_2,\vect{a}_3$ linearly independent, one has $MM^+=\vect{1}$ and one may simply take $\vect{r}=M^+\vect{u}$, so~\eqref{eqn:SIReln_nobs_avg} is tight in this case.
However, if the Bloch vectors $\vect{a}_1,\dots\vect{a}_n$ are not linearly independent (as, in particular, is the case for $n > 3$) then such an $\vect{r}$ exists if and only if $MM^+\vect{u}=\vect{u}$.

One can understand this by noting that, once $\braket{A_1},\braket{A_2},\braket{A_3}$ are determined (assuming, without loss of generality, that $\vect{a}_1,\vect{a}_2,\vect{a}_3$ are linearly independent) then this uniquely determines each $\braket{A_i}$ for $i>3$.
Thus, only three of the expectation values can be considered free variables, whereas Eq.~\eqref{eqn:SIReln_nobs_avg} has $n-1$ free variables.
The requirement that $MM^+\vect{u}=\vect{u}$ expresses this further constraint. 
Although Eq.~\eqref{eqn:SIReln_nobs_avg} is thus not, by itself, a tight relation for $n\ge 4$, 
it becomes so when supplemented by the further constraint that the expectation values $(\braket{A_1},\dots,\braket{A_n})\coloneqq\vect{u}$ also satisfy $MM^+\vect{u}=\vect{u}$.

Let us provide some explicit examples of relations based on Lemma~\ref{lemma_nobs}. 
Consider first the case of $n = 3$ Pauli observables $A=\vect{a}\cdot \bm{\sigma}$, $B=\vect{b}\cdot \bm{\sigma}$ and $C=\vect{c}\cdot \bm{\sigma}$ with linearly independent Bloch vectors: we obtain, after multiplication by 
$V^2=(\, \vect{a}\cdot(\vect{b}\times\vect{c})\, )^2>0$ 
the relation\footnote{To calculate the matrix $(M^+)^\T M^+$ and obtain Eq.~\eqref{eqn:SIReln_3obs_avg}, one can for instance parametrize $\vect{a}$ and $\vect{b}$ as in the proof of Lemma~\ref{lemma_2obs}, and define $\vect{c} = \frac{1}{\sqrt{1-(\vect{a} \cdot \vect{b})^2}} \big( ( \vect{a} \cdot \vect{c} + \vect{b} \cdot \vect{c} ) \sqrt{\frac{1-\vect{a}\cdot\vect{b}}{2}}, ( \vect{a} \cdot \vect{c} - \vect{b} \cdot \vect{c} ) \sqrt{\frac{1+\vect{a}\cdot\vect{b}}{2}}, V\big)^\T$.}

\begin{eqnarray}\label{eqn:SIReln_3obs_avg}
	&& \left|\ (\vect{b}\times\vect{c})\braket{A} + (\vect{c}\times\vect{a})\braket{B} + (\vect{a}\times\vect{b})\braket{C} \ \right|^2 \notag \\
	&& \qquad  = \ \left| \vect{b}\times\vect{c} \right|^2 \braket{A}^2 + \left| \vect{a}\times\vect{c} \right|^2 \braket{B}^2 + \left| \vect{a}\times\vect{b} \right|^2 \braket{C}^2 + 2 \, (\vect{b}\times\vect{c})\cdot(\vect{c}\times\vect{a}) \braket{A} \braket{B} \nonumber \\
	&& \qquad \qquad \ + \,  2 \, (\vect{b}\times\vect{c})\cdot(\vect{a}\times\vect{b}) \braket{A} \braket{C} + 2 \, (\vect{c}\times\vect{a})\cdot(\vect{a}\times\vect{b}) \braket{B} \braket{C} \ = \ V^2 \, |\vect{r}|^2 \ \le \ V^2 \,. \qquad \quad
\end{eqnarray}

\noindent
This relation \emph{is} tight, and the quantum state with Bloch vector $\vect{r}=M^+(\braket{A},\braket{B},\braket{C})^\T$ has the required expectation values for any $(\braket{A},\braket{B},\braket{C})$ satisfying~\eqref{eqn:SIReln_3obs_avg}.
In the special case of three orthogonal measurements ($\vect{a} \cdot \vect{b} = \vect{a} \cdot \vect{c} = \vect{b} \cdot \vect{c} = 0$, $V^2=1$), we thus find the well-known relation

\begin{equation}
	\braket{A}^2 + \braket{B}^2 + \braket{C}^2 \ = \ |\vect{r}|^2 \ \le \ 1 \,.
\end{equation}

Consider as another example $n = 4$ observables with Bloch vectors pointing to the vertices of a regular tetrahedron. 
In that case we find\footnote{One can use here for instance the parametrisation $\vect{a}_1 = (1,1,1)^\T/\sqrt{3}$, $\vect{a}_2 = (1,-1,-1)^\T/\sqrt{3}$, $\vect{a}_3 = (-1,1,-1)^\T/\sqrt{3}$, $\vect{a}_4 = (-1,-1,1)^\T/\sqrt{3}$.} 
\begin{equation}\label{eqn:tetrahedronExpectations}
		3 \sum_{1 \le i \le 4} \braket{A_i}^2 - \sum_{1 \le i \neq j \le 4} \braket{A_i}\braket{A_j} \ = \ \frac{16}{3} \, |\vect{r}|^2 \ \le \ \frac{16}{3} \, .
\end{equation}
For any quantum state $\rho$, the expectation values for these observables must further satisfy $\braket{A_4}=-\braket{A_1}-\braket{A_2}-\braket{A_3}$.
Thus, Eq.~\eqref{eqn:tetrahedronExpectations} can be made tight by further imposing this constraint or simply replacing $\braket{A_4}$ by this expression in the relation.

\section{Uncertainty relations in terms of standard deviations}

The relations for the expectation values presented in the previous section now allow us to proceed with the main aim of the paper and derive tight state-independent uncertainty relations in terms of standard deviations. 
To do so, we note that any Pauli operator $A$ satisfies $A^2 = \vect{1}$, so that $\braket{A^2} = 1$ and therefore the standard deviation $\Delta A$ is simply related to the expectation value $\braket{A}$ by
\begin{equation}\label{eqn:StdDev}
	(\Delta A)^2 = 1 - \braket{A}^2 \quad \text{and} \quad \braket{A} = \pm \sqrt{1 - (\Delta A)^2} \,.
\end{equation}
Note that because of the $\pm$ sign above, some care needs to be taken when applying the previous relations to derive uncertainty relations for standard-deviations.

\subsection{Uncertainty relation for two Pauli observables}

Let us start again with two Pauli observables. Using 
$(\vect{a}\cdot\vect{b}) \braket{A} \braket{B} \le |(\vect{a}\cdot\vect{b}) \braket{A} \braket{B}| = |\vect{a}\cdot\vect{b}|\sqrt{1-(\Delta A)^2}\sqrt{1-(\Delta B)^2}$ and reordering the terms, the following relation follows directly from Lemma~\ref{lemma_2obs}:

\begin{Theorem}\label{thm:SIReln2obs}
	For any pair of Pauli observables $A=\vect{a}\cdot \bm{\sigma}$ and $B=\vect{b}\cdot \bm{\sigma}$, every quantum state $\rho = \frac12 (\vect{1}+\vect{r}\cdot \bm{\sigma})$ satisfies the state-independent uncertainty relation
	\begin{equation}\label{eqn:SIReln2obs}
		(\Delta A)^2 + (\Delta B)^2 + 2 \, |\vect{a}\cdot\vect{b}|\sqrt{1-(\Delta A)^2}\sqrt{1-(\Delta B)^2}
		\ \ge \ 2 - \big(1-(\vect{a}\cdot\vect{b})^2 \big) \, |\vect{r}|^2
		\ \ge \ 1 + (\vect{a}\cdot\vect{b})^2.
	\end{equation}
	In the case where $|\vect{a}\cdot\vect{b}| < 1$, the first inequality above is saturated if and only if $\vect{r}\in\Span\{\vect{a},\vect{b}\}$ and $(\vect{a}\cdot\vect{b}) \braket{A} \braket{B} \ge 0$, and the second one if and only if $\rho$ is a pure state.
\end{Theorem}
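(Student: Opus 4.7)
The plan is to deduce Theorem~\ref{thm:SIReln2obs} directly from Lemma~\ref{lemma_2obs} by re-expressing every expectation value in terms of a standard deviation via the Pauli identity $\braket{A}^2 = 1-(\Delta A)^2$ (and similarly for $B$), which is valid because $A^2 = \vect{1}$ implies $\braket{A^2}=1$. This identity, already recorded as Eq.~\eqref{eqn:StdDev}, turns each squared expectation value into a function of the corresponding standard deviation, but leaves a residual sign ambiguity in the cross term $\braket{A}\braket{B}$ that will have to be handled separately.

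First, I would substitute $\braket{A}^2 = 1-(\Delta A)^2$ and $\braket{B}^2 = 1-(\Delta B)^2$ into both inequalities of Eq.~\eqref{eqn:SIReln2obs_avg} and rearrange to move $(\Delta A)^2 + (\Delta B)^2$ to the left. This yields the intermediate chain $(\Delta A)^2 + (\Delta B)^2 + 2(\vect{a}\cdot\vect{b})\braket{A}\braket{B} \ge 2-(1-(\vect{a}\cdot\vect{b})^2)|\vect{r}|^2 \ge 1+(\vect{a}\cdot\vect{b})^2$, whose right-hand side already matches the one of the theorem; only the left-hand side needs to be converted into a state-independent expression.

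Next, I would eliminate the sign dependence on $\braket{A}$ and $\braket{B}$ in the remaining cross term. Using $|\braket{A}| = \sqrt{1-(\Delta A)^2}$ and $|\braket{B}| = \sqrt{1-(\Delta B)^2}$, the trivial bound $(\vect{a}\cdot\vect{b})\braket{A}\braket{B} \le |\vect{a}\cdot\vect{b}|\sqrt{1-(\Delta A)^2}\sqrt{1-(\Delta B)^2}$ lets me replace the cross term on the left by this state-independent upper bound while preserving the inequality, which gives precisely Eq.~\eqref{eqn:SIReln2obs}. The saturation conditions then follow by checking when both substitutions are simultaneously tight: Lemma~\ref{lemma_2obs} tells me that its first inequality is saturated iff $\vect{r}\in\Span\{\vect{a},\vect{b}\}$ and its second iff $|\vect{r}|=1$ (i.e., $\rho$ pure), while the cross-term bound is saturated iff $(\vect{a}\cdot\vect{b})\braket{A}\braket{B} = |\vect{a}\cdot\vect{b}||\braket{A}||\braket{B}|$, i.e., iff $(\vect{a}\cdot\vect{b})\braket{A}\braket{B} \ge 0$.

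I do not anticipate any substantial obstacle, since the whole argument reduces in one rearrangement and one application of $xy \le |x||y|$ to Lemma~\ref{lemma_2obs}. The only subtlety is sign bookkeeping in the cross term: a too-quick use of the absolute-value bound would correctly yield the inequality but obscure the precise saturation condition $(\vect{a}\cdot\vect{b})\braket{A}\braket{B}\ge 0$, so I would keep the signed quantity explicit until the very last step.
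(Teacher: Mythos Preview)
Your proposal is correct and follows essentially the same route as the paper: substitute $\braket{A}^2=1-(\Delta A)^2$, $\braket{B}^2=1-(\Delta B)^2$ into Lemma~\ref{lemma_2obs}, rearrange, and then bound the cross term via $(\vect{a}\cdot\vect{b})\braket{A}\braket{B}\le|\vect{a}\cdot\vect{b}|\sqrt{1-(\Delta A)^2}\sqrt{1-(\Delta B)^2}$. Your explicit tracking of the saturation conditions (combining the equality cases of Lemma~\ref{lemma_2obs} with equality in the cross-term bound) is exactly what underlies the statement in the theorem.
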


One can easily check that the relation~\eqref{eqn:SIReln2obs} is equivalent to Eq.~\eqref{eqn:Li-QiaoReln} and to the relation given in Ref.~\cite{Li:2015wf} for a fixed value of $|\vect{r}|$.
The uncertainty region it defines\footnote{It may be interesting to note that, in the $((\Delta A)^2,(\Delta B)^2)$-plane, this region corresponds to the convex hull of an ellipse and the point $(1,1)$.}
is shown in Figure~\ref{fig:UncertanityRelns} for the cases where $\vect{a}\cdot\vect{b} = 0$ and $|\vect{a}\cdot\vect{b}| = \frac{1}{2}$.
\begin{figure}[t]
	\begin{center}
		\begin{tabular}{ccc}
			\includegraphics[width=0.47\textwidth]{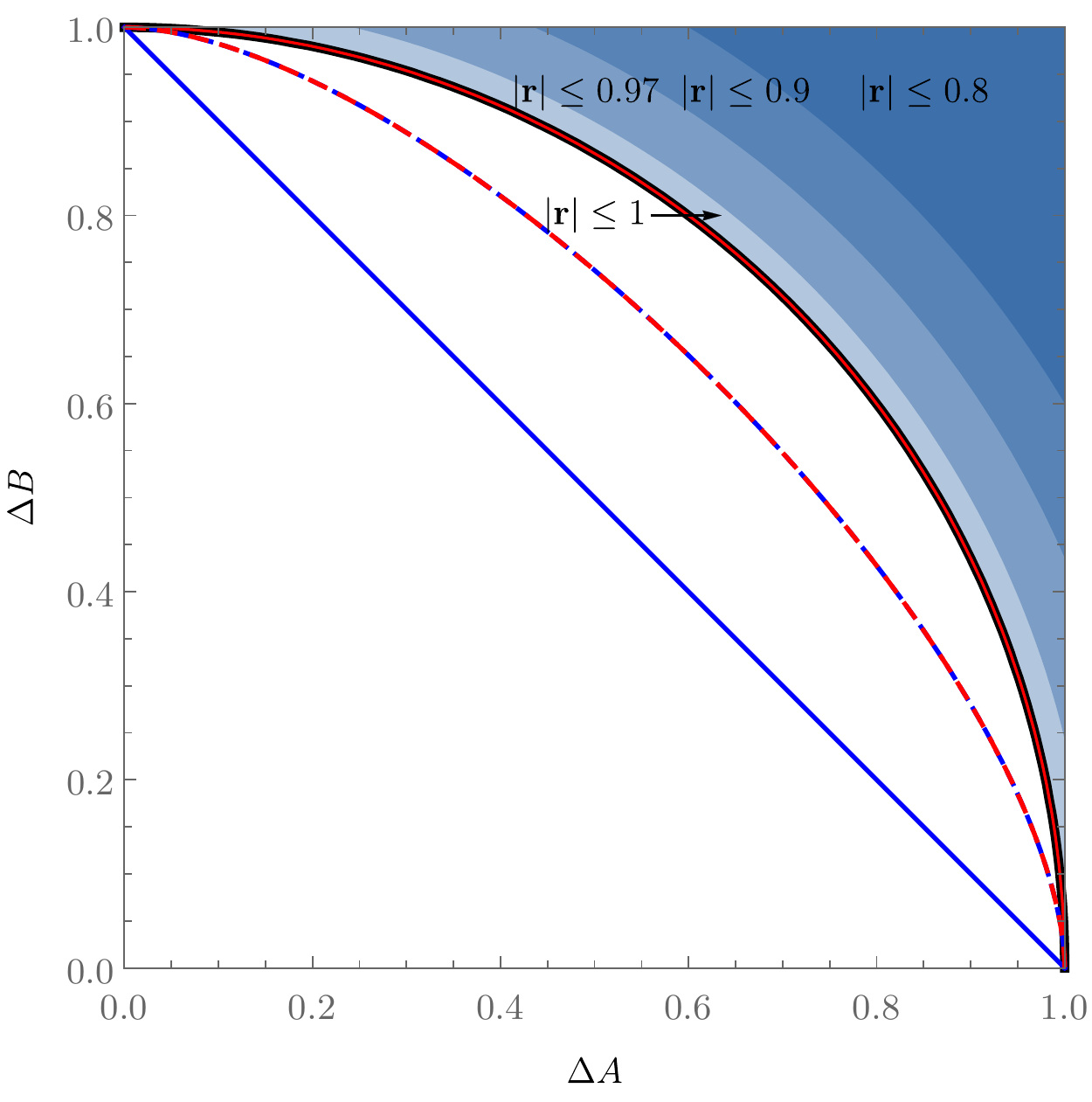} &
			 &
			\includegraphics[width=0.47\textwidth]{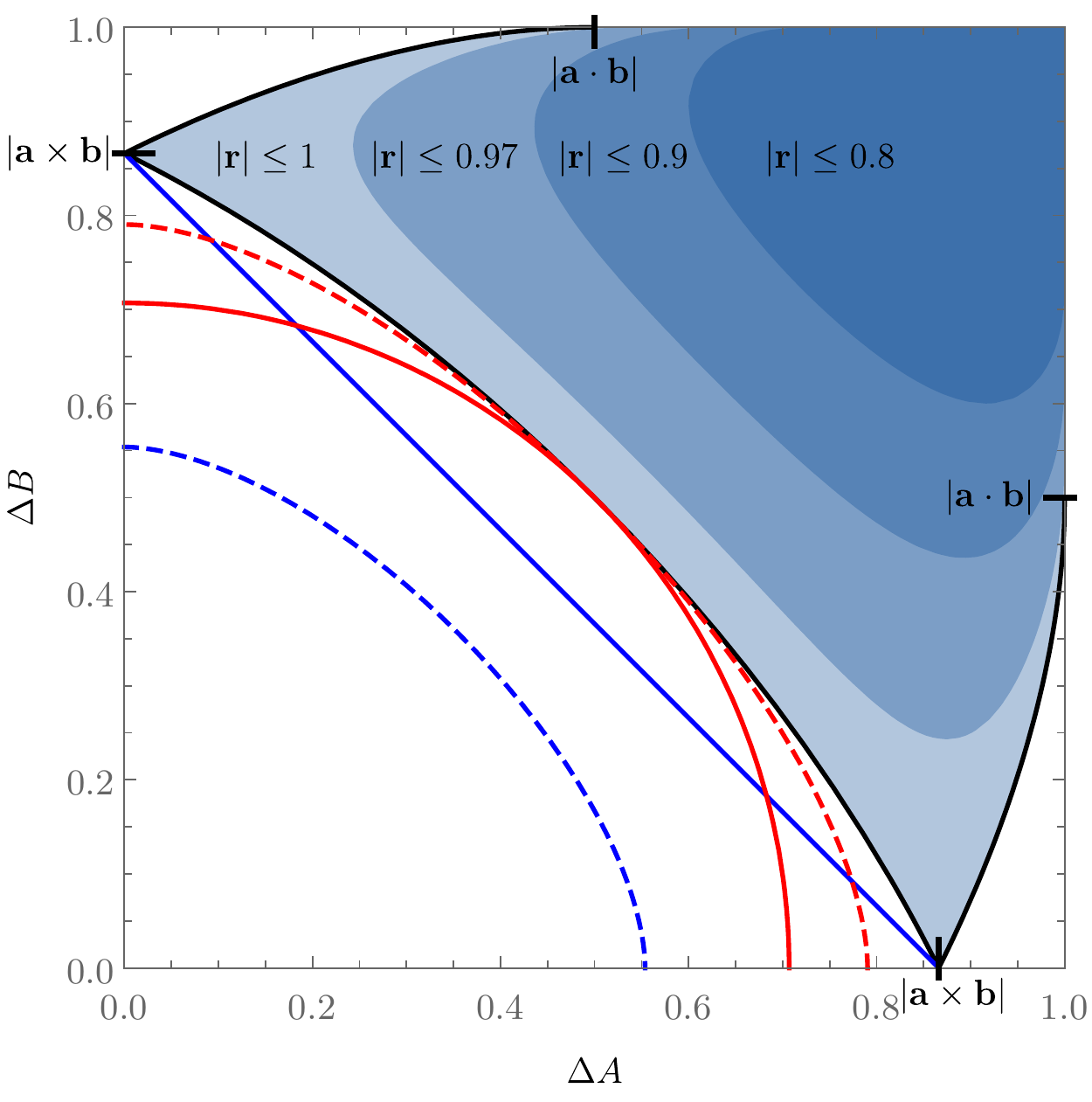}\\
			(a) & & (b)\\
		\end{tabular}
	\end{center}
	\caption{
	Representation of the uncertainty regions (filled areas) characterised by Eq.~\eqref{eqn:SIReln2obs}, shown here (a) for $\vprod{a}{b}=0$ and (b) for $|\vprod{a}{b}]=\frac{1}{2}$. The values on the thick black curves saturate the uncertainty relation~\eqref{eqn:SIReln2obs}. The darker areas represent the allowed values of $(\Delta A, \Delta B)$ for mixed states with bounded Bloch vector norms $|\vect{r}| \le 0.97, 0.9$, and $0.8$, characterised using the first inequality in Eq.~\eqref{eqn:SIReln2obs}.
	\\
	We compare here these uncertainty regions to the bounds given by Eqs.~\eqref{eqn:Busch1} and \eqref{eqn:Busch2} (blue and red curves, which touch the uncertainty region for $(\Delta A,\Delta B)=(|\vect{a}\times\vect{b}|,0)$ or $(0,|\vect{a}\times\vect{b}|)$, and for $(\Delta A,\Delta B)=(\sqrt{\frac{1 - |\vect{a}\cdot\vect{b}|}{2}},\sqrt{\frac{1 - |\vect{a}\cdot\vect{b}|}{2}})$, respectively), as well as the two entropic uncertainty relations given by Eq.~\eqref{eqn:EntrRelnMU} and in Ref.~\cite{Garrett:1990bf,Sanchez-Ruiz:1998by,Ghirardi:2003ez,Vicente:2008oq} (blue and red dashed curves, respectively), which we translate in terms of standard deviations using $H(A) = h_2\big(\frac{1{+}\sqrt{1{-}(\Delta A)^2}}{2}\big)$ (with $h_2$ the binary entropy function; see Section~\ref{sec:entropic}). Our relation is clearly stronger than all these bounds. Note that in the special case where $\vprod{a}{b}=0$, the bounds given by Eqs.~\eqref{eqn:SIReln2obs} and~\eqref{eqn:Busch2} coincide, as do the two entropic uncertainty relations.}
	\label{fig:UncertanityRelns}
\end{figure}

It is worth noting that, using the relation
\begin{equation}\label{eqn:commutatorNormLink}
\sqrt{1 - (\vect{a}\cdot\vect{b})^2} = |\vect{a}\times\vect{b}| = \big| \tfrac{1}{2i} [A,B]\big|
\end{equation}
where, in the last term, $|\cdot|$ denotes the operator norm,
Eq.~\eqref{eqn:SIReln2obs} can be expressed in terms of this commutator -- a measure of the incompatibility of $A$ and $B$ -- instead of the inner product $\vprod{a}{b}$.
Interestingly, one can thus visualise this incompatibility as the area of the parallelogram defined by $\vect{a}$ and $\vect{b}$.

Theorem~\ref{thm:SIReln2obs} provides a \emph{tight} state-independent uncertainty relation. 
One can indeed verify, as we did for Lemma~\ref{lemma_2obs}, that any pair of values $(\Delta A, \Delta B)$ satisfying Eq.~\eqref{eqn:SIReln2obs} can be attained, for instance (in the nontrivial case $|\vect{a}\cdot\vect{b}| < 1$), by the state with Bloch vector
\begin{equation}
		\vect{r} = \frac{1}{1-(\vect{a}\cdot\vect{b})^2} \Big[ \sqrt{1-(\Delta A)^2} \, \big( \vect{a} - (\vect{a}\cdot\vect{b}) \, \vect{b} \big) + \tau \sqrt{1-(\Delta B)^2} \, \big( \vect{b} - (\vect{a}\cdot\vect{b}) \, \vect{a} \big) \Big]
\end{equation}
with $\tau = \sgn (\vect{a}\cdot\vect{b})$, which is a valid Bloch vector (i.e. its norm is at most $1$) if and only if Eq.~\eqref{eqn:SIReln2obs} is satisfied. Note that it saturates the first inequality in~\eqref{eqn:SIReln2obs}.
To saturate both inequalities in~\eqref{eqn:SIReln2obs} for a given value of $\Delta A$, consider the pure states in the $\vect{a}\vect{b}$-plane with Bloch vectors
\begin{equation}
		\vect{r}_\pm = \sqrt{1-(\Delta A)^2} \ \vect{a} \pm \tau \, \frac{\Delta A}{\sqrt{1-(\vect{a}\cdot\vect{b})^2}} \big( \vect{b} - (\vect{a}\cdot\vect{b}) \, \vect{a} \big) \, .
\end{equation}
It can be checked that $\vect{r}_+$ always saturates both inequalities in~\eqref{eqn:SIReln2obs}, while $\vect{r}_-$ does so only when $\Delta A \le |\vect{a}\cdot\vect{b}|$, $\Delta A=1$ or $\vect{a}\cdot\vect{b} = 0$ (see Figure~\ref{fig:UncertanityRelns}). Note that in any case, $\Delta B(\vect{r}_+) \le \Delta B(\vect{r}_-)$ (with equality for $\Delta A=0$, $\Delta A=1$ or $\vect{a}\cdot\vect{b} = 0$).

The Bloch vectors $\vect{r_+}$ and $\vect{r_-}$ hence completely characterise the boundary of the uncertainty region in the space $\Delta A \times \Delta B$ where Eq.~\eqref{eqn:SIReln2obs} is saturated; the boundary is completed by the line segments $(\Delta A \ge |\vprod{a}{b}|, \Delta B = 1)$ and $(\Delta A = 1, \Delta B \ge |\vprod{a}{b}|)$; see Figure~\ref{fig:UncertanityRelns}.
Often, one is interested only in the \emph{monotone closure} of the uncertainty region (as in Refs.~\cite{Abdelkhalek:2015cr,Dammeier:2015zt}); that is, the closure under increasing either coordinate of the set of realisable pairs $(\Delta A,\Delta B)$.
The lower bound of this uncertainty region is obtained by the state with Bloch vector $\vect{r_+}$ for $\Delta A \le \sqrt{1-(\vect{a}\cdot\vect{b})^2}$, since for this state $\Delta B=0$ when $\Delta A=\sqrt{1-(\vect{a}\cdot\vect{b})^2}=|\vect{a}\times\vect{b}|$.

This monotone closure can also be characterised by an uncertainty relation that follows\footnote{For a fixed value of $|\vect{r}|$, one can similarly show (e.g., by replacing $(\Delta A)^2$ by $\big( 1-\frac{1-(\Delta A)^2}{|\vect{r}|^2} \big)$ and $(\Delta B)^2$ by $\big( 1-\frac{1-(\Delta B)^2}{|\vect{r}|^2} \big)$ in the calculations) that the monotone closure is given by
	\begin{equation*}
		(\Delta A)^2 + (\Delta B)^2 + 2 \, | \vect{a}\cdot\vect{b} | \, \sqrt{|\vect{r}|^2 - \big( 1-(\Delta A)^2 \big)} \, \sqrt{|\vect{r}|^2 - \big( 1-(\Delta B)^2 \big)}
		\ \ge \ 2 - \big(1 + (\vect{a}\cdot\vect{b})^2\big) \, |\vect{r}|^2 .
	\end{equation*}} from Eq.~\eqref{eqn:SIReln2obs}:
\begin{Theorem}
	The monotone closure of~\eqref{eqn:SIReln2obs} is given by
	\begin{equation}\label{eqn:SIReln2obs_monotone_closure}
		(\Delta A)^2 + (\Delta B)^2 + 2 \, | \vect{a}\cdot\vect{b} | \, \Delta A \, \Delta B
		\ \ge \ 1 - (\vect{a}\cdot\vect{b})^2 .
	\end{equation}
\end{Theorem}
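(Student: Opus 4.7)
Write $x=(\Delta A)^2$, $y=(\Delta B)^2$, $c=|\vect{a}\cdot\vect{b}|$, and dispose of the trivial case $c=1$ (where both inequalities become $(\Delta A+\Delta B)^2\ge 0$). The plan is to show the two inclusions between the monotone closure of the uncertainty region given by Eq.~\eqref{eqn:SIReln2obs} and the region defined by Eq.~\eqref{eqn:SIReln2obs_monotone_closure}.

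For the ``easy'' inclusion, I would first observe that the left-hand side of Eq.~\eqref{eqn:SIReln2obs_monotone_closure}, namely $x+y+2c\sqrt{xy}$, is monotone non-decreasing in $\Delta A,\Delta B\ge 0$, so that Eq.~\eqref{eqn:SIReln2obs_monotone_closure} defines a monotone set. Then I would prove that Eq.~\eqref{eqn:SIReln2obs} implies Eq.~\eqref{eqn:SIReln2obs_monotone_closure} pointwise. If $x+y\ge 1-c^2$ this is immediate; otherwise $x+y<1-c^2<1+c^2$, so squaring the rewriting $2c\sqrt{(1-x)(1-y)}\ge(1+c^2)-(x+y)$ of Eq.~\eqref{eqn:SIReln2obs} and expanding both sides yields, after cancellation,
\begin{equation*}
4c^2\,xy \ \ge \ \bigl((1-c^2)-(x+y)\bigr)^2.
\end{equation*}
Since the right-hand parenthesis is positive, taking square roots gives Eq.~\eqref{eqn:SIReln2obs_monotone_closure}. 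Combined with the monotonicity observation, this shows that the monotone closure of the uncertainty region is contained in the region defined by Eq.~\eqref{eqn:SIReln2obs_monotone_closure}.

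For the reverse inclusion I would, given an arbitrary pair $(\Delta A,\Delta B)$ satisfying Eq.~\eqref{eqn:SIReln2obs_monotone_closure}, explicitly exhibit a realisable pair dominated by it. If $\Delta A\ge\sqrt{1-c^2}$ then the $\pm 1$-eigenstate of $B$ realises $(\sqrt{1-c^2},0)\le(\Delta A,\Delta B)$, and symmetrically if $\Delta B\ge\sqrt{1-c^2}$. Otherwise both $\Delta A,\Delta B<\sqrt{1-c^2}$ and I would solve the boundary equation $(\Delta A)^2+(\Delta B^\ast)^2+2c\,\Delta A\,\Delta B^\ast=1-c^2$ for $\Delta B^\ast\ge 0$; monotonicity of the left-hand side in $\Delta B^\ast$, together with the values $(\Delta A)^2\le 1-c^2$ at $\Delta B^\ast=0$ and $\ge 1-c^2$ at $\Delta B^\ast=\Delta B$, guarantees a unique solution $\Delta B^\ast\in[0,\Delta B]$. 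The pair $(\Delta A,\Delta B^\ast)$ is then realised by the state $\vect{r}_+$ of the preceding discussion (with parameter $\Delta A\le\sqrt{1-c^2}$): that state saturates Eq.~\eqref{eqn:SIReln2obs}, and the computation above identifies the saturation, on the lower branch $x+y\le 1-c^2$, with saturation of Eq.~\eqref{eqn:SIReln2obs_monotone_closure}.

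The main obstacle is the squaring/expansion step in the second paragraph: one has to recognise that squaring the implicit boundary $x+y+2c\sqrt{(1-x)(1-y)}=1+c^2$ yields $(x+y-(1-c^2))^2=4c^2\,xy$, whose two branches $x+y\pm 2c\sqrt{xy}=1-c^2$ isolate the monotonically decreasing lower envelope of the uncertainty region—precisely Eq.~\eqref{eqn:SIReln2obs_monotone_closure}. Once this algebraic identity is in hand, both the implication and the explicit parametrisation of the boundary by pure states follow with little further work.
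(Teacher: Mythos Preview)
Your argument is correct and rests on the same algebraic identity as the paper's proof, namely $4c^2(1-x)(1-y)-\bigl(1+c^2-(x+y)\bigr)^2=4c^2xy-\bigl(1-c^2-(x+y)\bigr)^2$, used in the same way (equivalence of the two inequalities on the branch $x+y\le 1-c^2$, together with monotonicity of~\eqref{eqn:SIReln2obs_monotone_closure}); the only cosmetic difference is that for the reverse inclusion you exhibit explicit witnessing states via $\vect{r}_+$, whereas the paper argues more abstractly from the equivalence and from the observation that all points with $x+y=1-c^2$ already lie in the uncertainty region. One small slip: for $c=1$ inequality~\eqref{eqn:SIReln2obs} does \emph{not} reduce to $(\Delta A+\Delta B)^2\ge 0$ but to the constraint $\Delta A=\Delta B$ (since $A=\pm B$); its monotone closure is nevertheless the full square $[0,1]^2$, so your conclusion in that degenerate case still stands.
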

\begin{proof}
	Let us first show that Eq.~\eqref{eqn:SIReln2obs_monotone_closure} is equivalent to Eq.~\eqref{eqn:SIReln2obs} if $(\Delta A)^2 + (\Delta B)^2 \le 1 - (\vect{a}\cdot\vect{b})^2$.
	In this case, then $(\Delta A)^2 + (\Delta B)^2 \ \le \ 1 + (\vect{a}\cdot\vect{b})^2$ also, and we can write Eq.~\eqref{eqn:SIReln2obs} as
	\begin{align}
		 \ 4 \, (\vect{a}\cdot\vect{b})^2 \, \left[1-(\Delta A)^2\right] \, \left[1-(\Delta B)^2\right]
		- \left[ 1 + (\vect{a}\cdot\vect{b})^2 - (\Delta A)^2 - (\Delta B)^2 \right]^2 \qquad \notag\\
		= \  4 \, (\vect{a}\cdot\vect{b})^2 \, (\Delta A)^2 \, (\Delta B)^2 
		- \big[ 1 - (\vect{a}\cdot\vect{b})^2 - (\Delta A)^2 - (\Delta B)^2 \big]^2 & \ \ge  \ 0.
	\end{align}
	Thus, still under the assumption that $(\Delta A)^2 + (\Delta B)^2 \ \le \ 1 - (\vect{a}\cdot\vect{b})^2$, we have 
	\begin{equation}
		 2 \, |\vect{a}\cdot\vect{b}| \, \Delta A \, \Delta B
		\ \ge \ 1 - (\vect{a}\cdot\vect{b})^2 - (\Delta A)^2 - (\Delta B)^2,
	\end{equation}
	which is precisely Eq.~\eqref{eqn:SIReln2obs_monotone_closure}.
	
	Equation~\eqref{eqn:SIReln2obs_monotone_closure} clearly defines its own monotone closure: if $(\Delta A,\Delta B)$ satisfy it, then so do any $(\Delta A', \Delta B')$ with $\Delta A' \ge \Delta A$ and $\Delta B' \ge \Delta B$. 
	Since, as we just showed, Eq.~\eqref{eqn:SIReln2obs} is equivalent to Eq.~\eqref{eqn:SIReln2obs_monotone_closure} in the region where $(\Delta A)^2 + (\Delta B)^2 \le 1-(\vprod{a}{b})^2$, its monotone closure in that region is given by~\eqref{eqn:SIReln2obs_monotone_closure}. 
	Furthermore, since all points with $(\Delta A)^2 + (\Delta B)^2 = 1-(\vprod{a}{b})^2$ satisfy Eq.~\eqref{eqn:SIReln2obs_monotone_closure}, and hence also (the equivalent, for these points) Eq.~\eqref{eqn:SIReln2obs}, then the whole region where $(\Delta A)^2 + (\Delta B)^2 \ge 1-(\vprod{a}{b})^2$ is in the monotone closure of~\eqref{eqn:SIReln2obs}. 
	All together, the monotone closure of~\eqref{eqn:SIReln2obs} is thus composed of the points in the region where $(\Delta A)^2 + (\Delta B)^2 \le 1-(\vprod{a}{b})^2$ which satisfy~\eqref{eqn:SIReln2obs_monotone_closure}, and of all points in the region $(\Delta A)^2 + (\Delta B)^2 \ge 1-(\vprod{a}{b})^2$; 
	since the latter clearly also satisfy~\eqref{eqn:SIReln2obs_monotone_closure}, this equation is sufficient to fully characterise the monotone closure of~\eqref{eqn:SIReln2obs}.
\end{proof}

Note that the relation given by Eq.~\eqref{eqn:SIReln2obs_monotone_closure} can readily be used to derive the weaker relations given by Eqs.~\eqref{eqn:Busch1} and~\eqref{eqn:Busch2}, which are thus in turn seen to follow from~\eqref{eqn:SIReln2obs}.
In particular, we can obtain~\eqref{eqn:Busch1} by noting that
\begin{equation}
	(\Delta A + \Delta B)^2 \ \ge \ (\Delta A)^2 + (\Delta B)^2 + 2 \, | \vect{a}\cdot\vect{b} | \, \Delta A \, \Delta B
			\ \ge \ 1 - (\vect{a}\cdot\vect{b})^2
\end{equation}
and thus
\begin{equation}
	\Delta A + \Delta B \ \ge \ \sqrt{1 - (\vect{a}\cdot\vect{b})^2} = |\vect{a}\times\vect{b}|.
\end{equation}
Eq.~\eqref{eqn:Busch2} can be obtained by seeing that
\begin{equation}
	(1 + |\vect{a}\cdot\vect{b}|) \, \big[ (\Delta A)^2 + (\Delta B)^2 \big] \ \ge \ (\Delta A)^2 + (\Delta B)^2 + 2 \, | \vect{a}\cdot\vect{b} | \, \Delta A \, \Delta B
			\ \ge \ 1 - (\vect{a}\cdot\vect{b})^2
\end{equation}
and thus
\begin{equation}
(\Delta A)^2 + (\Delta B)^2 \ge 1 - |\vect{a}\cdot\vect{b}|.
\end{equation}

Figure~\ref{fig:UncertanityRelns} also shows these two bounds in comparison to the tight relation~\eqref{eqn:SIReln2obs} that we derived.
Note that if $A$ and $B$ are orthogonal spin measurements (i.e., $\vect{a}\cdot\vect{b} = 0$) then Eq.~\eqref{eqn:Busch2} reduces to Eq.~\eqref{eqn:SIReln2obs} and is tight, but for all other cases neither~\eqref{eqn:Busch1} nor~\eqref{eqn:Busch2} are tight: these bounds are only obtained by states with $(\Delta A,\Delta B)=(|\vect{a}\times\vect{b}|,0)$ or $(0,|\vect{a}\times\vect{b}|)$ for Eq.~\eqref{eqn:Busch1}, and $\Delta A = \Delta B = \sqrt{\frac{1 - |\vect{a}\cdot\vect{b}|}{2}}$ for Eq.~\eqref{eqn:Busch2}.

We note finally that it is also possible to express the relation~\eqref{eqn:SIReln2obs_monotone_closure} in a further form that may be useful in understanding the lower bound of the uncertainty region.
Specifically, Eq.~\eqref{eqn:SIReln2obs_monotone_closure} is satisfied if and only if\footnote{While finishing this manuscript we became aware that the second part of Eq.~\eqref{eqn:SIReln2obs_monotone_closure2} had been independently derived by P. Busch using a geometric argument~\cite{Busch:2015zh}.}
\begin{equation}\label{eqn:SIReln2obs_monotone_closure2}
	(\Delta A)^2 + (\Delta B)^2 \ \ge \ 1 \quad \text{or} \quad \Delta A \, \sqrt{1-(\Delta B)^2} + \Delta B \, \sqrt{1-(\Delta A)^2} \ \ge \ \sqrt{1 - (\vect{a}\cdot\vect{b})^2} \, ,
\end{equation}
where the lower bound of the uncertainty region lies, as we saw before, in the region $(\Delta A)^2 + (\Delta B)^2 \le 1$ and is expressed by the second half of Eq.~\eqref{eqn:SIReln2obs_monotone_closure2}.
The proof that this alternative form is equivalent to Eq.~\eqref{eqn:SIReln2obs_monotone_closure} mirrors one given in Ref.~\cite{Branciard:2014mz} so we do not give it here, but it is important to note that the second half of this relation is violated by some allowable uncertainty pairs (e.g. $\Delta A=\Delta B=1$), and hence the first disjunction is essential for the relation to be valid for all pairs $(\Delta A,\Delta B)$.

Using Eq.~\eqref{eqn:commutatorNormLink} we note that the right-hand side of the second inequality in Eq.~\eqref{eqn:SIReln2obs_monotone_closure2} is precisely $|{\tfrac{1}{2i}}[A,B]|$, so the monotone closure of the uncertainty region can be expressed as a relation \emph{only} on the uncertainties bounded by a function of the commutator of $A$ and $B$.

\subsection{Uncertainty relations for $n$ Pauli observables}

Similarly to the previous case of two measurements, one can now use Lemma~\ref{lemma_nobs} to derive state-independent uncertainty relations for more observables. Namely, using $\braket{A_i} = \pm \sqrt{1-(\Delta A_i)^2}$, we obtain:

\begin{Theorem}\label{thm:SIReln_nobs}
For any $n$ Pauli observables $A_1,\dots,A_n$ with Bloch vectors $\vect{a}_1,\dots,\vect{a}_n$ spanning the whole Bloch sphere, every quantum state $\rho = \frac12 (\vect{1}+\vect{r}\cdot \bm{\sigma})$ satisfies the relation
\begin{eqnarray}
	\exists \, \tau_1, \ldots, \tau_n = \pm 1 , \quad 
	\sum_{1 \le i \le n} \!\! m_{i,i} \, (\Delta A_i)^2 - \sum_{1 \le i \neq j \le n} \!\!\! \tau_i \, \tau_j \, m_{i,j} \sqrt{1-(\Delta A_i)^2} \sqrt{1-(\Delta A_j)^2} \hspace{15mm} \nonumber \\[-2mm]
	= \, \Big( \!\sum_{1 \le i \le n} \!\! m_{i,i} \Big)-|\vect{r}|^2 \,\ge\, \Big( \!\sum_{1 \le i \le n} \!\! m_{i,i} \Big) -1, \label{eqn:SIReln_nobs}
\end{eqnarray}
where the $m_{i,j}$ coefficients are the elements of the symmetric matrix $(M^+)^\T M^+$, with $M^+$ the pseudoinverse of the matrix $M=(\vect{a}_1^\T,\dots,\vect{a}_n^\T)^\T$.

The relation~\eqref{eqn:SIReln_nobs} can furthermore be written without the existential quantifier as an inequality ($\ge$) instead of an equality if one replaces all $-\tau_i \, \tau_j \, m_{i,j}$ by $+|m_{i,j}|$. 
This inequality then also holds if the Bloch vectors do not span the whole Bloch sphere, and is saturated if and only if $\vect{r} \in \Span \{\vect{a}_1,\dots,\vect{a}_n\}$ and $m_{i,j} \braket{A_i} \braket{A_j} \le 0$ for all $i \neq j$.
\end{Theorem}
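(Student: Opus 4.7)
The plan is to reduce Theorem~\ref{thm:SIReln_nobs} almost immediately to Lemma~\ref{lemma_nobs} by substituting standard deviations for expectation values. Starting from the equality $\sum_{i,j} m_{i,j} \braket{A_i}\braket{A_j} = |\vect{r}|^2$ (valid when the $\vect{a}_i$ span the whole Bloch sphere), I would split the sum into diagonal and off-diagonal terms and use $\braket{A_i}^2 = 1 - (\Delta A_i)^2$ on the diagonal. For the off-diagonal terms, I introduce signs $\tau_i = \pm 1$ (with $\tau_i = \sgn \braket{A_i}$ when $\braket{A_i} \neq 0$, and chosen arbitrarily otherwise) so that $\braket{A_i} = \tau_i \sqrt{1-(\Delta A_i)^2}$. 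Moving the diagonal contribution to the right-hand side then yields the claimed equality
\begin{equation*}
\sum_i m_{i,i} (\Delta A_i)^2 - \sum_{i\neq j} \tau_i \tau_j \, m_{i,j} \sqrt{1-(\Delta A_i)^2}\sqrt{1-(\Delta A_j)^2} = \Big(\!\sum_i m_{i,i}\Big) - |\vect{r}|^2,
\end{equation*}
and the bound $|\vect{r}|^2 \le 1$ gives the stated $\ge$ on the right.

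For the quantifier-free inequality version, I use the elementary bound $-\tau_i \tau_j \, m_{i,j} \le |m_{i,j}|$, which holds for all sign choices, so that replacing $-\tau_i\tau_j m_{i,j}$ by $+|m_{i,j}|$ can only increase the left-hand side of the equality. Hence the inequality
\begin{equation*}
\sum_i m_{i,i} (\Delta A_i)^2 + \sum_{i\neq j} |m_{i,j}| \sqrt{1-(\Delta A_i)^2}\sqrt{1-(\Delta A_j)^2} \,\ge\, \Big(\!\sum_i m_{i,i}\Big) - 1
\end{equation*}
follows without any existential statement. For the case where the $\vect{a}_i$ do not span the full Bloch sphere, Lemma~\ref{lemma_nobs} gives $\sum_{i,j} m_{i,j}\braket{A_i}\braket{A_j} \le |\vect{r}|^2$; exactly the same rearrangement still produces the inequality, since the direction of the bound is preserved after moving the diagonal terms.

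Finally, I would check the saturation conditions. The bound $-\tau_i\tau_j m_{i,j} \le |m_{i,j}|$ becomes an equality (whenever the accompanying product of square roots is nonzero) exactly when $\tau_i \tau_j m_{i,j} = -|m_{i,j}|$, i.e.\ when $m_{i,j} \braket{A_i}\braket{A_j} \le 0$; combined with saturation in Lemma~\ref{lemma_nobs}, which requires $\vect{r} \in \Span\{\vect{a}_1,\dots,\vect{a}_n\}$, and with $|\vect{r}| = 1$ for the final step, this reproduces the stated saturation criteria. The only mildly delicate point—rather than a true obstacle—is verifying that the $\tau_i$'s can always be consistently chosen to realise the equality form, which is handled by the $\sgn \braket{A_i}$ prescription above (with free choices when $\braket{A_i}=0$, where the corresponding square root vanishes and no constraint is imposed).
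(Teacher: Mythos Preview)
Your proposal is correct and matches the paper's approach exactly: the paper derives Theorem~\ref{thm:SIReln_nobs} directly from Lemma~\ref{lemma_nobs} by substituting $\braket{A_i} = \tau_i\sqrt{1-(\Delta A_i)^2}$ with $\tau_i = \sgn\braket{A_i}$, splitting into diagonal and off-diagonal terms, and then bounding $-\tau_i\tau_j m_{i,j} \le |m_{i,j}|$ for the quantifier-free version. One small remark: the saturation clause in the theorem refers to the \emph{first} inequality (against $\big(\sum_i m_{i,i}\big)-|\vect{r}|^2$), so the extra condition $|\vect{r}|=1$ you mention is not part of the stated criteria---it pertains only to the second step.
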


Note that the quantifiers ``$\exists \, \tau_i$'' have been introduced in Eq.~\eqref{eqn:SIReln_nobs} to make the relation state-independent (i.e.,\ so that it is a constraint that can be evaluated solely on the (measurable) $\Delta A_i$'s, where no other terms depend on the quantum state). 
In practice, however, if $\rho$ is known, one can simply take the signs to be $\tau_i = \sgn \braket{A_i}$.

As for Lemma~\ref{lemma_nobs}, Theorem~\ref{thm:SIReln_nobs} does not give tight relations for $n>3$.
However, it can easily be made tight by requiring, in addition, that $(\tau_1\sqrt{1-(\Delta A_1)^2},\dots,\tau_n\sqrt{1-(\Delta A_n)^2})\coloneqq\vect{u}$ satisfies $MM^+\vect{u}=\vect{u}$.
Crucially, note that this further condition is state-independent, just like Eq.~\eqref{eqn:SIReln_nobs}.

Let us illustrate again this relation for some specific examples. For the case of $n = 3$ Pauli observables $A$, $B$ and $C$ with linearly independent Bloch vectors, we obtain the tight relation\footnote{We recover, in particular, a relation given in Ref.~\cite{Li:2015wf} for the specific case $\vect{a} \cdot \vect{c} = \vect{b} \cdot \vect{c} = 0$ (although the relation in~\cite{Li:2015wf} omitted the quantifier ``$\exists\, \tau_{ab} = \pm 1$'').}
\begin{eqnarray}
	&& \hspace{-7mm} \forall \, \rho, \ \exists \, \tau_A, \tau_B, \tau_C = \pm 1 , \nonumber\\[2mm]
	&& \left| \vect{b}\times\vect{c} \right|^2 \, (\Delta A)^2 + \left| \vect{a}\times\vect{c} \right|^2 \, (\Delta B)^2 + \left| \vect{a}\times\vect{b} \right|^2 \, (\Delta C)^2 \nonumber \\
	&& \quad - \, 2 \, \tau_A\, \tau_B\, (\vect{b}\times\vect{c})\cdot(\vect{c}\times\vect{a}) \sqrt{1-(\Delta A)^2} \sqrt{1-(\Delta B)^2} \nonumber \\
	&& \quad \quad - \, 2 \, \tau_A\, \tau_C\, (\vect{b}\times\vect{c})\cdot(\vect{a}\times\vect{b}) \sqrt{1-(\Delta A)^2} \sqrt{1-(\Delta C)^2} \nonumber \\
	&& \quad \quad \quad - \, 2 \, \tau_B\, \tau_C\, (\vect{c}\times\vect{a})\cdot(\vect{a}\times\vect{b}) \sqrt{1-(\Delta B)^2} \sqrt{1-(\Delta C)^2} \nonumber \\
	&& \quad \quad \quad \quad = \ |\vect{a}\times\vect{b}|^2 + |\vect{a}\times\vect{c}|^2+|\vect{b}\times\vect{c}|^2 - V^2 \, |\vect{r}|^2 \ \ge \ 2 - 2 \,(\vect{a} \cdot \vect{b})(\vect{a} \cdot \vect{c})(\vect{b} \cdot \vect{c}) \quad \quad\label{eqn:3obsStdDevs}
\end{eqnarray}
where 
\begin{equation}
V^2 = (\,\vect{a}\cdot(\vect{b}\times\vect{c})\,)^2 
	= |\vect{a}\times\vect{b}|^2 + |\vect{a}\times\vect{c}|^2+|\vect{b}\times\vect{c}|^2 + 2(\vprod{a}{b})(\vprod{a}{c})(\vprod{b}{c})-2
	> 0.
\end{equation}
It is interesting to note that $V$ corresponds to the volume of the parallelepiped defined by the Bloch vectors $\vect{a}$, $\vect{b}$ and $\vect{c}$, and hence is a measure of the mutual incompatibility of the observables.
When the three measurements are orthogonal ($\vect{a} \cdot \vect{b} = \vect{a} \cdot \vect{c} = \vect{b} \cdot \vect{c} = 0$, $V^2=1$), then we get~\cite{Hofmann:2003xd,Li:2015wf,Dammeier:2015zt}
\begin{equation}
	(\Delta A_1)^2 + (\Delta A_2)^2 + (\Delta A_3)^2 \ = \ 3 - |\vect{r}|^2 \ \ge \ 2 \,.
\end{equation}

In the case of the regular tetrahedron measurements described earlier, we find 
\begin{eqnarray}
	&& \hspace{-8mm} \forall \, \rho, \ \exists \, \tau_1, \ldots, \tau_4 = \pm 1, \nonumber \\[2mm]
	&& 3 \sum_{1 \le i \le 4} (\Delta A_i)^2 + \sum_{1 \le i \neq j \le 4} \tau_i \, \tau_j \, \sqrt{1-(\Delta A_i)^2} \sqrt{1-(\Delta A_j)^2} \ = \ 12 - \frac{16}{3} \, |\vect{r}|^2 \ \ge \ \frac{20}{3} \, .
\end{eqnarray}

\section{Entropic uncertainty relations}
\label{sec:entropic}

In the case of qubits, the Shannon entropy of a Pauli observable $A$ as defined in Eq.~\eqref{eqn:def_HA} can be directly expressed in terms of the expectation value $\braket{A}$, namely:
\begin{equation}\label{eqn:binaryEntropy}
	H(A) = h_2 \Big(\frac{1+\braket{A}}{2} \Big) = h_2 \Big(\frac{1-\braket{A}}{2} \Big) \,,
\end{equation}
where $h_2$ is the binary entropy function defined as $h_2(p) = -p \log p - (1-p) \log (1-p)$.
Denoting by $h_2^{-1}$ the inverse function of $h_2$ restricted to the domain $p \in [0,\frac12]$, one can invert this relation and obtain
\begin{equation}
	\braket{A} = \pm f\big( H(A) \big) \quad \textrm{with} \quad f\big( H(A) \big) = 1 - 2 \, h_2^{-1} \big( H(A) \big) = |\braket{A}| \, \ge \, 0\, .
\end{equation}
This allows us now to express the relations of Lemmas~\ref{lemma_2obs} and~\ref{lemma_nobs} in terms of the Shannon entropies $H(A_i)$. 
Note that because of the $\pm$ sign above, the same care as before must be taken when transforming these relations into entropic ones. 
In fact, the exact same analysis as in the previous section (which we shall not repeat explicitly) can be carried out: one can simply replace all terms $\sqrt{1-(\Delta A)^2}$ by $f\big( H(A) \big)$ and the variances $(\Delta A)^2$ by $1 - f\big( H(A) \big)^2$.

For instance, one obtains the following tight relation for two Pauli observables (similarly to Theorem~\ref{thm:SIReln2obs}, with the same conditions for saturation):
\begin{equation}\label{eqn:SIReln2obs_entropy}
		f\big( H(A) \big)^2 + f\big( H(B) \big)^2 - 2 \, |\vect{a}\cdot\vect{b}|\, f\big( H(A) \big)\, f\big( H(B) \big)
		\ \le \  \big(1-(\vect{a}\cdot\vect{b})^2 \big) \, |\vect{r}|^2
		\ \le \ 1 - (\vect{a}\cdot\vect{b})^2.
	\end{equation}

\begin{figure}[t]
	\begin{center}
		\begin{tabular}{ccc}
			\includegraphics[width=0.47\textwidth]{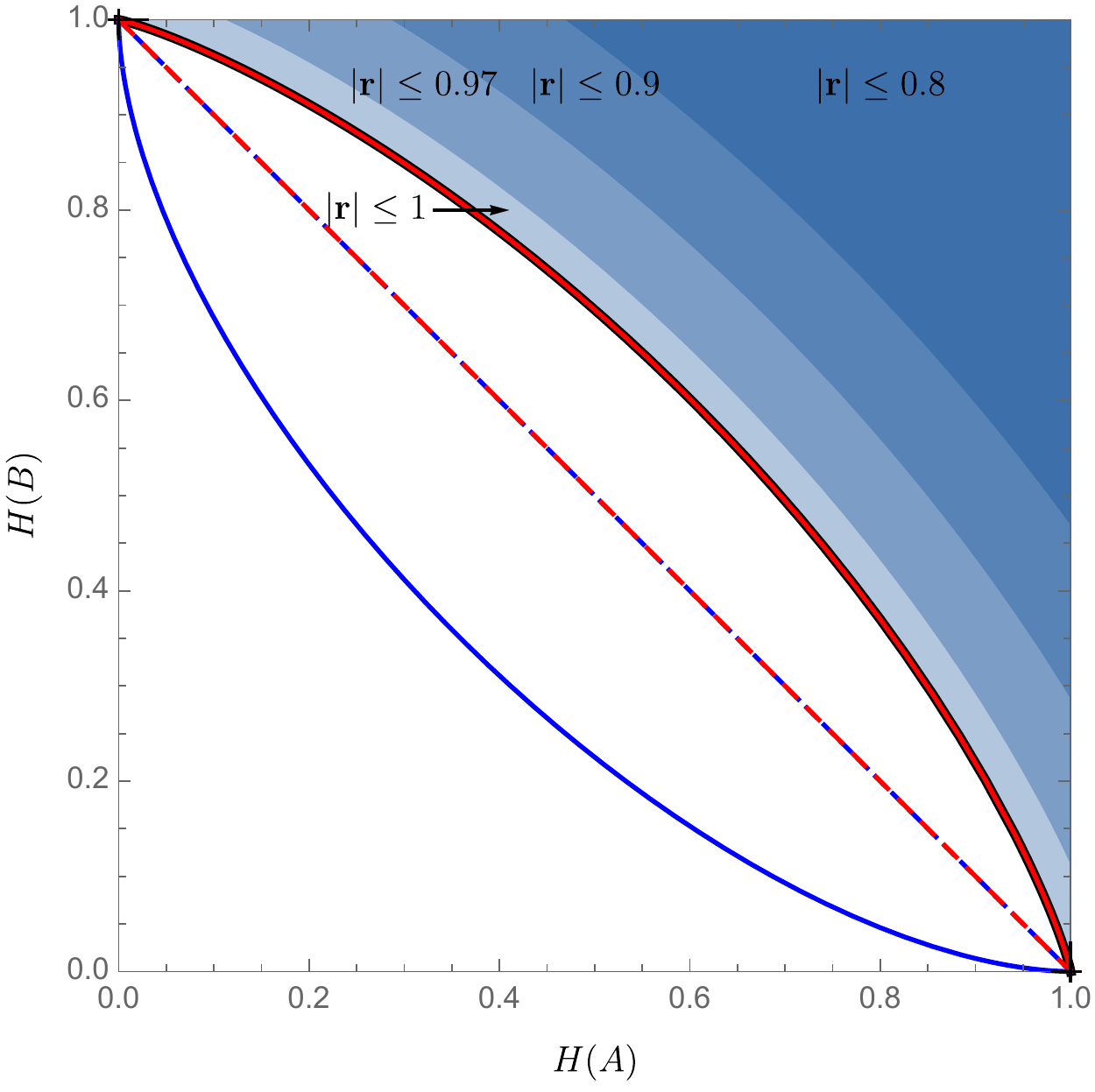} &
			 &
			\includegraphics[width=0.47\textwidth]{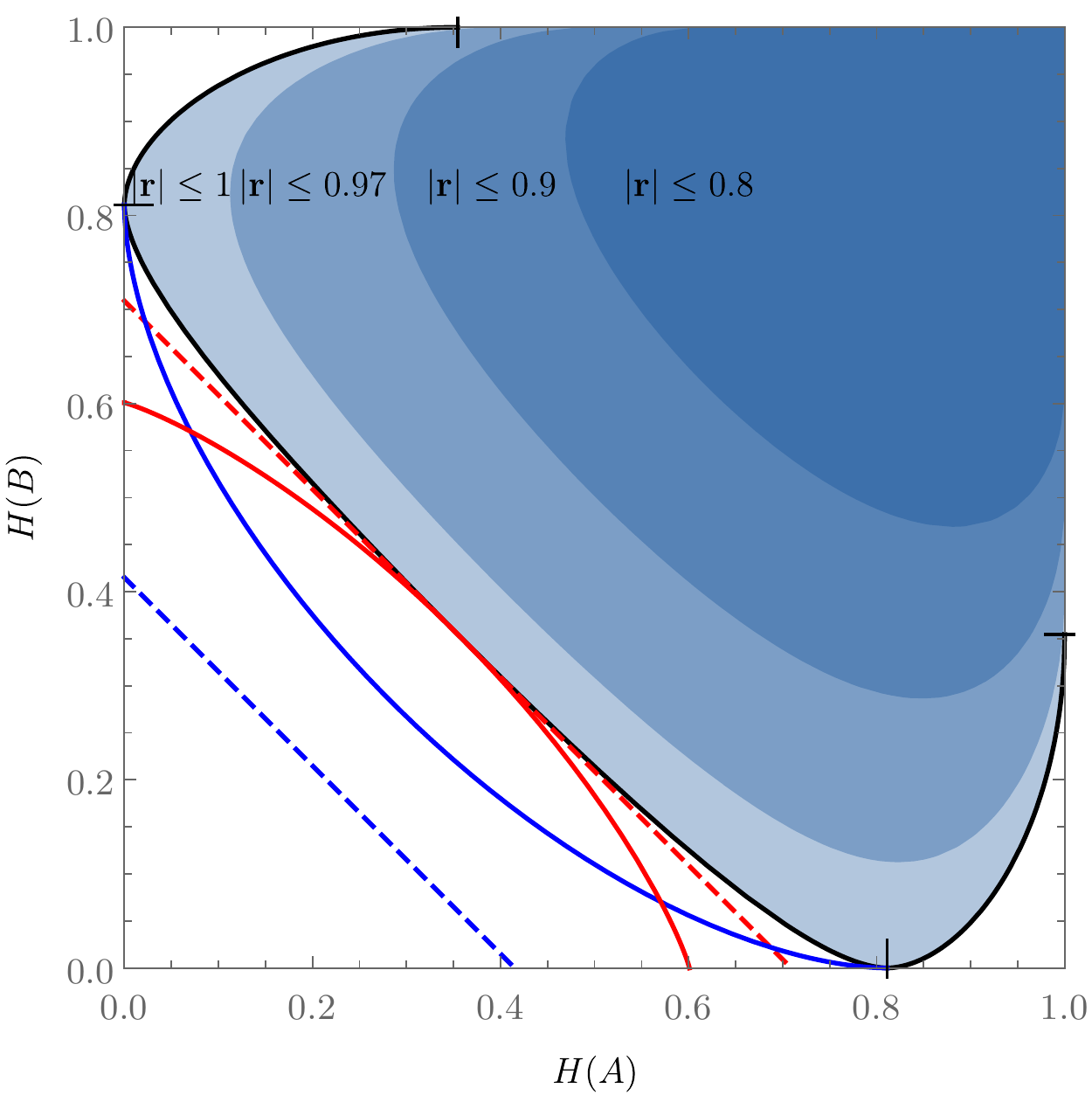}\\
			(a) & & (b)\\
		\end{tabular}
	\end{center}
	\caption{
	Analogous figure to Fig.~\ref{fig:UncertanityRelns}, with the uncertainty regions and relations now shown in the Shannon entropy domain, for (a) $\vprod{a}{b}=0$ and (b) $|\vprod{a}{b}|=\frac{1}{2}$, i.e., for maximum overlaps $c = \sqrt{\frac{1+|\vect{a}\cdot\vect{b}|}{2}} = \frac{1}{\sqrt{2}}$ and $c=\frac{\sqrt{3}}{2}$, respectively. The filled area corresponds to relation~\eqref{eqn:SIReln2obs_entropy} (which is saturated for the values of $(H(A), H(B))$ on the thick black curve), with the darker areas representing values attainable for mixed states with bounded Bloch vector norms.
	The dashed blue and red curves correspond to the Maassen--Uffink bound~\eqref{eqn:EntrRelnMU} and to that of Refs.~\cite{Garrett:1990bf,Sanchez-Ruiz:1998by,Ghirardi:2003ez,Vicente:2008oq}, respectively. In (b), the former is clearly suboptimal (it coincides with the latter only for $\vprod{a}{b}=0$), while the latter is the optimal bound that can be put on the sum $H(A) + H(B)$, and is thus tangential to the uncertainty region characterised by~\eqref{eqn:SIReln2obs_entropy} (it touches it at the point $H(A)=H(B)=h_2\big(\frac{1+|\vprod{a}{b}|}{2}\big)$ for $|\vprod{a}{b}| \gtrsim 0.391$, and on two symmetric points when $|\vprod{a}{b}| \lesssim 0.391$; the critical value $|\vprod{a}{b}| \simeq 0.391$ corresponds to $\arccos|\vprod{a}{b}| \simeq 0.585$~\cite{Sanchez-Ruiz:1998by}, or $c \simeq 0.834$~\cite{Vicente:2008oq}).
	The solid blue and red curves correspond to the bounds~\eqref{eqn:Busch1} and~\eqref{eqn:Busch2}, respectively, translated in terms of entropies using $\Delta A = \sqrt{1 - f\big( H(A) \big)^2}$ (see text).}
	\label{fig:EntropicRelns}
\end{figure}

Figure~\ref{fig:EntropicRelns} shows the bound given by Eq.~\eqref{eqn:SIReln2obs_entropy} in the space of Shannon entropies $H(A),H(B)$ for $\vprod{a}{b}=0$ and $|\vprod{a}{b}|=\frac{1}{2}$, in comparison to the weaker Massen--Uffink bound~\eqref{eqn:EntrRelnMU} -- with the maximum overlap given here by $c = \sqrt{\frac{1+|\vect{a}\cdot\vect{b}|}{2}}$ -- as well as the better, but still not tight, bound given in Refs.~\cite{Garrett:1990bf,Sanchez-Ruiz:1998by,Ghirardi:2003ez,Vicente:2008oq}.

\section{Higher-dimensional systems}

One may wonder whether our approach can be generalised to give state-independent uncertainty relations for qutrits or higher-dimensional systems. As we will see, while it remains possible to obtain (not necessarily tight) relations on the expectation values of a set of observables, the peculiar geometry of the generalised Bloch sphere for qutrits (and beyond), as well as the increased number of eigenvalues of the observables, prevents us from expressing these relations in terms of standard deviations or entropies, as we could do for qubits.

In a $d$-dimensional Hilbert space an arbitrary traceless observable $A$ can be expressed in terms of a set of $d\times d$ traceless Hermitian matrices $\lambda_i$, $1\le i \le d^2-1$, which generate the group $\mathrm{SU(}d\mathrm{)}$.
These operators satisfy the commutation relations
\begin{equation}
	[\lambda_i,\lambda_j]=2\,i\,f_{ijk}\,\lambda_k,\ \ \ \ \{\lambda_i,\lambda_j\}=\frac{4}{d}\,\delta_{ij} \vect{1} + 2\,d_{ijk}\,\lambda_k \, ,
\end{equation}
where $[\cdot,\cdot]$ and $\{\cdot,\cdot\}$ are the commutator and anticommutator, $\delta_{ij}$ is the Kronecker delta, $f_{ijk}$ and $d_{ijk}$ are antisymmetric and symmetric structure constants of $\mathrm{SU(}d\mathrm{)}$, respectively, and where the summation over repeated indices is implicit.
For qutrits the operators $\lambda_1$ to $\lambda_8$ are the Gell-Mann matrices, for example.
As for the two-dimensional case, taking $\vect{\bm{\lambda}}=(\lambda_1,\dots,\lambda_{d^2-1})^\T$ we can write any traceless observable $A$ as $A=\vprod{a}{\bm{\lambda}}$ and thus represent it by its generalised Bloch vector $\vect{a}$.

An arbitrary state $\rho$ can similarly be written in terms of its generalised Bloch vector $\vect{r}$ as 
\begin{equation}
\rho= \frac{1}{d}\vect{1} + \frac{1}{2}\vprod{r}{\bm{\lambda}} \, ,
\end{equation}
where now $|\vect{r}| \le \sqrt{2(1-\frac{1}{d})}$ (with equality for pure states).
However, for $d \geq 3$ it is \emph{not} the case that \emph{any} vector $\vect{r}$ with $|\vect{r}|\le \sqrt{2(1-\frac{1}{d})}$ represents a valid quantum state~\cite{Bertlmann:2008aa,Kimura:2003aa}: the set of valid quantum states (i.e., the Bloch vector space) is a strict subset of the unit sphere in $d$ dimensions.

The expectation value of $A$ in the state $\rho$ can still be expressed as $\braket{A}=\vect{a}\cdot \vect{r}$, so that Lemma~\ref{lemma:generalForm}, with $M$ and $\vect{u}$ defined as before, remains valid for higher-dimensional systems. This allows one to derive state-independent relations for the expectation values $(\braket{A_1}, \ldots, \braket{A_n})$, as we did for qubits in Section~\ref{sec:uncertainties}. Note that these relations may not be tight, as the vectors $\vect{r}$ saturating~\eqref{eqn:PseudoInvSD} may not correspond to valid quantum states.%
\footnote{
It is also worth noting that the ``ellipsoid condition'' of Kaniewski \emph{et al.}~\cite{Kaniewski:2014aa}, which is a higher-dimensional version of Lemma~\ref{lemma:generalForm} for \emph{binary-valued} measurements, is state-dependent for $d\ge 3$, so it also fails to give the type of state-independent uncertainty relation we would like in this scenario.
}

Contrary to the case of qubits however, for $d \ge 3$ one cannot directly translate these relations to express them solely in terms of standard deviations or entropies; indeed, because of the larger number of eigenvalues and of the geometry of the generalised Bloch sphere, the expectation value $\braket{A}$ does not contain all the information about the uncertainty of $A$. For instance, the relation between $(\Delta A)^2$, $\vect{a}$ and $\vect{r}$ is not simply given by Eq.~\eqref{eqn:StdDev} (where $\braket{A}=\vect{a}\cdot \vect{r}$), but by
\begin{equation}\label{eqn:StdDevHighDim}
	(\Delta A)^2 = \frac{2}{d}|\vect{a}|^2 + \vprod{a'}{r} - (\vprod{a}{r})^2,
\end{equation}
where $\vect{a'}$ is a $d$-dimensional vector with components $a'_k=a_i \, a_j \, d_{ijk}$~\cite{Li:2015wf}.
Thus, the uncertainty of $A$ for a state $\rho$ no longer depends only on the angle between the Bloch vectors $\vect{r}$ and $\vect{a}$.
Furthermore, in contrast to qubit operators, for three-or-more-level systems there are pairs of non-commuting observables that have a common eigenstate and hence can simultaneously have zero variance.
In general, there is no simple analytical description of the Bloch space for qudits~\cite{Bertlmann:2008aa,Kimura:2003aa}, so it seems implausible to give generalised forms of tight, state-independent uncertainty relations for such systems.

For certain choices of $A$ and $B$ a complete analysis of the set of obtainable values for $\Delta A$ and $\Delta B$ is nevertheless tractable (at least for qutrits), and it is possible to give tight state-independent uncertainty relations.
Similarly, for well chosen $A$ and $B$ a more general higher-dimensional analysis is possible if one is prepared to settle for relations that are not tight.
In Ref.~\cite{Dammeier:2015zt}, for example, such behaviour is analysed for angular momentum observables in orthogonal directions.
Such an approach, however, lacks the generality of the approach possible for qubits, and is necessarily, at least in part, \emph{ad hoc}.

\section{Discussion} 
\label{sec:discussion}

By exploiting the relationship between the expectation values of Pauli observables and standard measures of uncertainty (such as the standard deviation), we have derived tight state-independent uncertainty relations for Pauli measurements on qubits.
These uncertainty relations completely characterise the allowed values of uncertainties for such observables.
Furthermore, we give the bounds on all these relations in terms of the norm $|\vect{r}|$ of the Bloch vector representing the state $\rho=\frac12(\vect{1}+\vprod{r}{\bm{\sigma}})$ -- which is directly linked to the purity of the state -- so that, if a bound on this is known, tighter (partially state-dependent) uncertainty relations can be obtained; for pure states $|\vect{r}|=1$ and the most general form is recovered.
The approach we take is general and, although we explicitly give tight uncertainty relations for arbitrary pairs and triples of Pauli observables, it can be used to give tight uncertainty relations for sets of arbitrarily many observables.

While we have focused on giving these uncertainty relations in terms of the standard deviations and variances of the observables, we showed how these can easily be rewritten in terms of Shannon entropies to give tight entropic uncertainty relations, and did so explicitly for pairs of observables.
These relations can furthermore be translated into uncertainty relations for any measure of uncertainty that depends only on the 
probability distribution, $\{\frac12(1+\vprod{a}{r}),\ \frac12(1-\vprod{a}{r})\}$, of an observable $A=\vect{a}\cdot\bm{\sigma}$ for a state $\rho=\frac12 (\vect{1}+\vect{r}\cdot \bm{\sigma})$, such as R\'enyi entropies.
Indeed, one may reasonably argue that the product $\vprod{a}{r}=\braket{A}$ is the \emph{only} parameter that an uncertainty measure for $A$ can depend on, and thus that our approach covers all kinds of preparation uncertainty relations for qubits.

Although we have given explicit uncertainty relations only for Pauli observables, it is simple to extend them to arbitrary qubit measurements.
To do so, note that one can write any observable $A$ in a two-dimensional Hilbert space as $A=\alpha\vect{1}+\vprod{a}{\bm{\sigma}}$, with $\alpha \in \mathbb{R}$ and $\vect{a}\in\mathbb{R}^3$. 
Assuming $|\vect{a}|>0$ (as otherwise $A$ is simply proportional to the identity operator and one trivially has $\Delta A=0$ for all states $\rho$), the observable $\tilde{A}=\vprod{\tilde{a}}{\bm{\sigma}}$ with $\vect{\tilde{a}} = \vect{a}/|\vect{a}|$ is a Pauli observable, and we have $\braket{\tilde{A}} = \frac{\braket{A}-\alpha}{|\vect{a}|}$ and $\Delta \tilde{A}=\frac{\Delta A}{|\vect{a}|}$. 
One can thus give an uncertainty relation involving $A$ by writing the corresponding relation that we derived for $\tilde{A}$, and then replacing $\vect{\tilde{a}}$ by $\vect{a}/|\vect{a}|$ and $\braket{\tilde{A}}$ or $\Delta \tilde{A}$ by the appropriate expression given above;
one can proceed similarly for the other observables in question.

Finally we note that, although we have not done so here, it is also possible to go beyond projective measurements and give similar relations for positive-operator valued measures (POVMs) for qubits with binary outcomes.
The two elements of any such POVM can be written in the form $A_\pm = \frac{1}{2}\big[\vect{1}\pm(\alpha\vect{1} + \vprod{a}{\bm{\sigma})}\big]$, where $|\alpha| + |\vect{a}| \le 1$. 
Attaching, for simplicity, the output values $\pm 1$ to the two POVM outcomes (this can easily be generalised), we can then define an ``effective operator'' $A = A_+ - A_-$ such that the expectation value of the POVM outcomes can be simply written as $\braket{A} = \alpha+\vprod{a}{r}$.
A similar construction to the one that led to Lemma~\ref{lemma:generalForm} can thus be used, this time utilising the ``effective operators'' and defining $\vect{u} = (\braket{A} - \alpha, \ldots)^\T$.
Lemmas~\ref{lemma_2obs} and~\ref{lemma_nobs} then still hold, after suitably substituting $\braket{A}$ for $\braket{A} - \alpha$ (and similarly for the other observables in the relation). 
To translate these into standard deviations, one can then use once more\footnote{Note that for a non-projective measurement, the standard deviation should \emph{not} be calculated as $\sqrt{\braket{A^2}-\braket{A}^2}$ using the ``effective operator'' $A$ we introduced (as in general, $A^2 \neq \vect{1}$, despite the POVM outcomes being $\pm 1$).} (for our choice of outcomes, $\pm 1$) $\Delta A=\sqrt{1-\braket{A}^2}$, where $\Delta A$ here is (in a slight abuse of notation) the POVM standard deviation, not that of the ``effective operator'', which thus leads to state-independent uncertainty relations for POVMs. 
Entropic uncertainty relations can similarly be obtained in this fashion.


\begin{acknowledgments}
AA and CB acknowledge financial support from the `Retour Post-Doctorants' program (ANR-13-PDOC-0026) of the French National Research Agency; CB also acknowledges the support of a Marie Curie International Incoming Fellowship (PIIF-GA-2013-623456) from the European Commission.
MH was supported by the ARC Centre of Excellence CE110001027.
\end{acknowledgments}


%

\end{document}